\definecolor{webgreen}{rgb}{0,0.4,0}
\definecolor{webbrown}{rgb}{0.6,0,0}
\definecolor{purple}{rgb}{0.5,0,0.25}
\definecolor{darkblue}{rgb}{0,0,0.7}
\definecolor{darkred}{rgb}{0.7,0,0}
\newcommand{\ignore}[1]{}
\newtheorem{lemma}{{\sc Lemma}}
\newtheorem{prop}{{\sc Proposition}}
\newtheorem{theorem}{{\sc Theorem}}
\newtheorem{defn}{{\sc Definition}}
\newtheorem{claim}{{\sc Claim}}
\newtheorem{example}{{\sc Example}}
\newtheorem{fact}{{\sc Fact}}
\newenvironment{proof}{\noindent {\bf \sl Proof\/}:\enspace}
{\hfill $\blacksquare{}$ \vspace{12pt}}
\begin{document}

\title{{\sc Roberts' Theorem with Neutrality: \\ A Social Welfare Ordering Approach}~\thanks{
We are extremely grateful to Sushil Bikhchandani, Juan Carlos Carbajal, Claude d'Aspremont,  
Mridu Prabal Goswami, Amit Goyal, Ron Lavi, Thierry Marchant, Herv\'{e} Moulin, Arup Pal, David Parkes, Kevin Roberts, 
Maneesh Thakur, Rakesh Vohra, and
seminar participants at the ``Multidimensional Mechanism Design" workshop
in Bonn, Universitat Aut\`{o}noma de Barcelona, and Delhi School
of Economics for their comments.}}
\author{Debasis Mishra\thanks{Corresponding Author. Indian
Statistical Institute, 7, S.J.S. Sansanwal Marg, New Delhi - 110016.}~ and Arunava Sen~\thanks{Indian
Statistical Institute, 7, S.J.S. Sansanwal Marg, New Delhi - 110016.}}
\maketitle

\begin{abstract}
We consider dominant strategy implementation in private values settings, when agents have 
multi-dimensional types, 
the set of alternatives is finite, monetary
transfers are allowed, and agents have
quasi-linear utilities. We show that any implementable and neutral social choice function
must be a weighted welfare maximizer if the type space of every agent is an
$m$-dimensional open interval, where $m$ is the number of alternatives.
When the type space of every agent is unrestricted, Roberts' theorem with neutrality \cite{Roberts79}
becomes a corollary to our result. Our proof technique uses a {\em social welfare ordering}
approach, commonly used in aggregation literature in social choice theory. We 
also prove the general (affine maximizer) 
version of Roberts' theorem for unrestricted type spaces of agents using this approach.
\end{abstract}

{\sc Keywords}: Dominant strategy mechanism design; Roberts' theorem; affine maximizers; social welfare ordering \\

{\sc JEL Classification}: D44.

\newpage

\section{Introduction}

The well-known Gibbard-Satterthwaite \cite{Gibbard73,Satter75} Impossibility Theorem in
mechanism design asserts that in unrestricted domains, every implementable social choice
function which has at least three alternatives in its range, must be dictatorial. A crucial
aspect of the unrestricted domain assumption is that monetary transfers are not permitted.
 However, models where monetary transfers are admissible are very important. Both the auction setting
 and the standard public good model assume that agents can receive monetary transfers (either positive
 or negative) and that the underlying utility function of every agent is quasi-linear in money. This
 paper is a contribution to the literature which investigates the structure of social choice functions
 which can be implemented in dominant strategies in these settings.

\cite{Vickrey61, Clarke71, Groves73} showed that efficient social choice functions can be
implemented by a unique family of transfer rules, now popularly known as Vickrey-Clarke-Groves
(VCG) transfer schemes. Remarkably, when the domain is unrestricted (as in the Gibbard-Satterthwaite
setup) and the range of the mechanism contains at least three alternatives, the
only (dominant strategy) implementable social choice functions are {\em affine maximizers}.
These social choice functions are generalizations of weighted efficiency rules. 
This result was proved by
\cite{Roberts79} in a seminal paper. It can be seen as the counterpart to
the Gibbard-Satterthwaite theorem for quasi-linear utility environments.

As in the literature without money, the literature with quasi-linear utility has since tried to
relax various assumptions in Roberts' theorem. \cite{Rochet87} shows that a certain
{\em cycle monotonicity} property characterizes dominant strategy implementable
social choice functions. Though this characterization is very general - works for any domains and any
set of alternatives (finite or infinite) - it is not as useful as the Roberts' theorem since
it does not give a functional form of the class of implementable social choice functions. Along the
lines of \cite{Rochet87}, \cite{Bikh06} and \cite{Saks05} have shown that a
{\em weak monotonicity} property characterizes implementable social choice
functions in auction settings, a severely restricted domain, when the set of alternatives is finite and the type space
is convex~\footnote{See also \cite{Monderer08} and \cite{Ashlagi09}. \cite{Ashlagi09}
prove the converse of this result also.}.
Again, the precise functional form of the implementable social choice functions
are missing in these characterizations. A fundamental open question is the following: \\

\noindent {\em What subdomains allow for a functional form of
implementable social choice functions?} \\

Several attempts have been made recently to simplify, refine, and extend
Roberts' theorem.
Using almost the same structure
and approach, \cite{Lavi08} reduced the complexity of Roberts' original proof. 
\cite{Dobzinski09} also provide an alternate (modular) proof of Roberts' theorem
for unrestricted domain. Building on the technique of \cite{Lavi08}, \cite{Carbajal09} extend
Roberts' theorem to {\em continuous} domains. Other proofs of Roberts' theorem can be found in
(for unrestricted domains) \cite{Jehiel08} and \cite{Vohra08}.

\subsection{Our Contribution}

Our paper contributes to the literature in two ways. First, we characterize  restricted domains where
the affine maximizer theorem holds
in the presence of an additional assumption on social choice functions, that of {\it neutrality}.
Neutrality requires the social choice function to treat all alternatives symmetrically.
It is a familiar axiom in social choice theory and we discuss it at greater length in Section
\ref{subsection:neutrality}. Our main result states that every implementable social choice
function is a weighted welfare maximizer if the type space of every agent is an $m$-dimensional
open interval, where $m$ is the number of alternatives.
For the unrestricted domain, our result implies Roberts' result in the special case where attention is
restricted to neutral social choice functions. We demonstrate that the neutrality assumption is essential for our
domain characterization result in the following sense: there exist (open interval) domains over which
an implementable non-affine-maximizer social choice function exists but over which all
neutral implementable social choice functions are weighted welfare maximizers.

Our second contribution is methodological and conceptual. Our proof technique differs significantly
from existing ones.  It can be summarized in three steps.

\begin{itemize}
\item[S1] We show that an implementable and neutral social choice function induces an
ordering on the domain.

\item[S2] This ordering satisfies three key properties: {\em weak Pareto, invariance, and continuity}.

\item[S3] We then prove a result on the representation of any ordering which satisfies these properties.
For unrestricted domains this result is familiar in the literature - see for instance, \cite{Blackwell54}, \cite{despremont77},
\cite{Blackorby84}, \cite{Trockel92} and \cite{despremont02}. We show that any ordering
on an open and convex set which satisfies the axioms specified in S2 can be represented
by a weighted welfare maximizer.

\end{itemize}

The key feature of our approach is to transform the problem of characterizing incentive-compatible social
choice functions over a domain into a particular problem of characterizing orderings of vectors in that domain.
The problem of characterizing orderings satisfying properties such as weak Pareto, invariance, continuity
etc (over the unrestricted domain), is a classical one in social choice theory. It arose from the recognition
that a natural way to escape the negative conclusions of the Arrow Impossibility Theorem was to enrich the
informational basis of Arrovian social welfare functions from individual preference orderings to utility
functions. If a social welfare function satisfies the (standard) axioms of Independence of Irrelevant
Alternatives and Pareto Indifference, then it is ``equivalent'' to an ordering over $\mathbb{R}^n$ where $n$
is the number of individuals. The aggregation problem in this environment can therefore be reduced to the
problem of determining an appropriate ordering of the vectors in $\mathbb{R}^n$. There is an extensive literature
which investigates exactly this question (see \cite{despremont02} for a comprehensive survey).

It has been known that there is a deep connection between two seemingly unrelated problems in social
choice - the strategic problem with the goal of characterizing incentive-compatible social choice functions and the
aggregation problem with the objective of characterizing social welfare functions satisfying the Arrovian axioms.
For instance in the case of the unrestricted domain consisting of all preference orderings, the Arrow Theorem
can be used to prove the Gibbard-Satterthwaite Theorem and vice-versa (for a unified approach to both problems
see \cite{Reny01}). Our proof serves to highlight this connection further by demonstrating the equivalence of
a strategic problem in a quasi-linear domain with an aggregation problem involving utility functions.

We also remark that though the representation result in Step S3 is well-known for unrestricted domains, 
and our extension to
open and convex domain may be of some independent interest.

Finally, we show how Roberts' affine maximizer theorem cab be proved using Roberts' theorem with neutrality.
This proof is contained in Section \ref{sec:gen}. 

\section{Roberts' Affine Maximizer Theorem}

Let $A=\{a,b,c,\ldots\}$ be a finite set of alternatives or allocations.
Suppose $|A|=m \ge 3$. Let $N=\{1,\ldots,n\}$ be a finite
set of agents. The type of agent $i$ is a vector in $\mathbb{R}^m$.
Denote by $t_i$ the type (vector) of agent $i \in N$, where
for every $a \in A$, $t_i^a$ denotes the {\em value} of
agent $i$ for alternative $a$ when his type is $t_i$.
A type profile will be denoted by $t$, and consists of
$n$ vectors in $\mathbb{R}^m$. Alternately, one
can view a type profile $t$ to be an $n \times m$ matrix,
where every row represents a type vector of an agent.
The column vectors are vectors in $\mathbb{R}^n$.
We refer to a column vector generated by a type profile
to be a utility vector. Hence, $t^a$ represents the utility
vector corresponding to allocation $a$ in type profile
$t$ and $t^{-a}$ will denote the
utility vectors in type profile $t$ except $t^a$. 

Let $T_i$ be the type space (the set of all type vectors) of
agent $i$. We assume $T_i=(\alpha_i,\beta_i)^m$ where
$\alpha_i \in \mathbb{R} \cup \{-\infty\}$, $\beta_i \in
\mathbb{R} \cup \{\infty\}$, and $\alpha_i < \beta_i$.
We call such a type space an {\bf $m$-dimensional open interval domain}.
The set of all type profiles is denoted by $\mathbb{T}^n
= T_1 \times T_2 \times \ldots T_n$.

Let the set of all utility vectors for every alternative in $A$ be 
$\mathbb{D} \subseteq \mathbb{R}^n$, which is an open rectangle
in $\mathbb{R}^n$. Hence, the
set of type profiles can alternatively written as $\mathbb{D}^m$.
Throughout, we will require different mathematical properties of
$\mathbb{D}$ which are satisfied by $T_i$ for every $i$ if it is an $m$-dimensional
open interval domain. In particular, note the following two properties
which hold if the type space is
an $m$-dimensional open interval:

\begin{enumerate}

\item If we have a type profile $t$ in our domain
and permute two utility vectors $t^a$ and $t^b$ in this type profile, then
we will get a valid type profile in our domain.

\item For every type profile $t$ in our domain and every $a \in A$, there exists $\epsilon \gg 0$~
\footnote{For every pair of vectors $x,y \in \mathbb{R}^n$, we say that
$x \gg y$ if and only if $x$ is greater than $y$ in every component.}
such that if we increase the utility vector $t^a$ by $\epsilon$,
then we get a valid type profile in our domain.

\end{enumerate}

The first property follows from the interval assumption and the second
property follows from the openness assumption. 
We use these two properties extensively in our proofs.

We use the standard
notation of $t_{-i}$ to denote a type profile of agents
in $N \setminus \{i\}$ and $\mathbb{T}_{-i}$ to denote the
type spaces of agents in $N \setminus \{i\}$.

A social choice function is a mapping $f:\mathbb{T}^n \rightarrow A$.
A payment function is a mapping $p:\mathbb{T}^n \rightarrow \mathbb{R}^n$.
The payment of agent $i$ at type profile $t$ is denoted by $p_i(t)$.

\begin{defn}
A social choice function $f$ is implementable (in dominant
stragies) if there exists a payment function $p$ such that
for every $i \in N$ and every $t_{-i}$, we have
\begin{align*}
t_i^{f(t_i,t_{-i})} + p_i(t_i,t_{-i}) &\ge t_i^{f(s_i,t_{-i})} + p_i(s_i,t_{-i}) \qquad
\forall~s_i,t_i \in T_i.
\end{align*}
In this case, we say that $p$ implements $f$.
\end{defn}

Every social choice function satisfies certain properties if it is implementable. Below, we give one such useful property.

\begin{defn}
A social choice function $f$ satisfies {\bf positive association of differences (PAD)} if for every $s,t \in \mathbb{T}^n$ such that
$f(t)=a$ with $s^a-t^a \gg s^b-t^b$ for all $b \ne a$, we have $f(s)=a$.
\end{defn}
%

\begin{lemma}[\cite{Roberts79}]
Every implementable social choice function satisfies PAD.
\end{lemma}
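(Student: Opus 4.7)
The plan is to prove the lemma by a one-agent-at-a-time deviation argument, using only the pairwise incentive-compatibility inequalities that come from implementability. I first fix $s,t\in\mathbb{T}^n$ satisfying the hypothesis of PAD, namely $f(t)=a$ and $s^a-t^a\gg s^b-t^b$ for every $b\ne a$. I then define intermediate profiles $t^{(0)},t^{(1)},\ldots,t^{(n)}$ by
\[
t^{(k)}=(s_1,\ldots,s_k,t_{k+1},\ldots,t_n),
\]
so that $t^{(0)}=t$ and $t^{(n)}=s$. Because the domain is a product of intervals, every $t^{(k)}$ lies in $\mathbb{T}^n$. The plan is to prove by induction on $k$ that $f(t^{(k)})=a$, which will then give $f(s)=a$.

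The base case $k=0$ holds by assumption. For the inductive step, assume $f(t^{(k-1)})=a$ and suppose for contradiction that $f(t^{(k)})=b$ for some $b\ne a$. Since $t^{(k-1)}$ and $t^{(k)}$ differ only in agent $k$'s report (which changes from $t_k$ to $s_k$), I invoke implementability twice for agent $k$ with a payment function $p$ that implements $f$: first, truthful reporting of $t_k$ against the fixed off-profile gives
\[
t_k^{a}+p_k(t^{(k-1)})\;\ge\;t_k^{b}+p_k(t^{(k)}),
\]
and truthful reporting of $s_k$ gives
\[
s_k^{b}+p_k(t^{(k)})\;\ge\;s_k^{a}+p_k(t^{(k-1)}).
\]
Adding the two inequalities cancels the payments and yields $s_k^{a}-s_k^{b}\le t_k^{a}-t_k^{b}$, i.e.\ $s_k^{a}-t_k^{a}\le s_k^{b}-t_k^{b}$. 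This directly contradicts the PAD hypothesis $s^a-t^a\gg s^b-t^b$, which in the $k$-th coordinate reads $s_k^{a}-t_k^{a}>s_k^{b}-t_k^{b}$. Hence $f(t^{(k)})=a$, completing the induction.

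There is no substantive obstacle here; the argument is essentially the ``two IC inequalities sum to a monotonicity condition'' trick that underlies weak monotonicity. The only small thing to verify is that each $t^{(k)}$ stays inside $\mathbb{T}^n$, which is immediate from the product structure of $\mathbb{T}^n=T_1\times\cdots\times T_n$, and that the strict inequality in the hypothesis is preserved coordinatewise, which is exactly what $\gg$ provides.
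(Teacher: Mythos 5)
Your proof is correct. Note that the paper states this lemma without proof (citing Roberts directly), so there is no paper argument to compare against; your one-agent-at-a-time deviation argument, which sums the two incentive-compatibility inequalities for the deviating agent to obtain $s_k^a-t_k^a\le s_k^b-t_k^b$ and then contradicts the $k$-th coordinate of the hypothesis $s^a-t^a\gg s^b-t^b$, is the standard route to PAD and correctly handles the small points (the intermediate profiles stay in the product domain $\mathbb{T}^n=T_1\times\cdots\times T_n$, and the strictness in $\gg$ is exactly what produces the contradiction).
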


A natural question to ask is what social choice functions are implementable. In an important result,
\cite{Roberts79} characterized the set of all social choice functions when the type space is
unrestricted and when the social choice function satisfies a condition called {\em non-imposition}.
\begin{defn}
A social choice function $f$ satisfies {\bf non-imposition} if for every $a \in A$, there exists $t \in \mathbb{T}^n$ such that $f(t)=a$.
\end{defn}

Using PAD and non-imposition, Roberts proved the following theorem.

\begin{theorem}[\cite{Roberts79}]\label{theo:roberts}
Suppose $T_i=\mathbb{R}^m$ for all $i \in N$. If $f$ is an implementable social choice function 
and satisfies non-imposition, then there exists weights $\lambda \in \mathbb{R}^n_+ \setminus \{0\}$ 
and a deterministic real-valued function $\kappa:A \rightarrow \mathbb{R}$ such that for all $t \in \mathbb{T}^n$,
\begin{align*}
f(t) &\in \arg \max_{a \in A}\big[\sum_{i \in N}\lambda_it_i^a-\kappa(a)\big]
\end{align*}
\end{theorem}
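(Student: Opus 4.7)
The plan is to deduce the affine-maximizer form from the neutrality version of Roberts' theorem (call it Theorem~A) established earlier in the paper for $m$-dimensional open interval domains. Given an implementable $f:\mathbb{R}^{mn}\to A$ satisfying non-imposition, I would construct from $f$ a surrogate social choice function $\tilde f$ that is implementable, non-imposing, and neutral, apply Theorem~A to $\tilde f$ to obtain weights $\lambda$, and finally unwind the construction to recover $f$ as an affine maximizer with those weights and with the constants $\kappa$ identified along the way.

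First, I would extract a candidate $\kappa:A\to\mathbb{R}$ directly from $f$. For each pair of alternatives $a,b$, use the unboundedness of $T_i=\mathbb{R}^m$ to drive the utilities $t^c$ with $c\notin\{a,b\}$ to very negative values, so that PAD forces $f$'s choice to lie in $\{a,b\}$ and to depend only on $u=t^a-t^b\in\mathbb{R}^n$. The indifference surface is then a hyperplane in $\mathbb{R}^n$, and its intersection with the all-ones direction defines a single scalar $c_{ab}$. PAD applied to triples of alternatives $a,b,c$ yields the cocycle identity $c_{ab}+c_{bc}=c_{ac}$, so fixing a reference $a_0$ and setting $\kappa(a):=c_{aa_0}$ gives a well-defined function with $\kappa(a_0)=0$.

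Second, I would shift the type profile by $\kappa$: define $\hat t_i^a := t_i^a + \kappa(a)$ and set $\tilde f(t):=f(\hat t)$, with payments adjusted correspondingly. Implementability of $f$ is preserved under such per-alternative common additive shifts (a direct check with the appropriate payment adjustment), non-imposition carries over immediately, and the cocycle identity ensures that the pairwise indifference region of $\tilde f$ passes through the origin for every pair of alternatives. A further PAD-based argument is needed to upgrade this pairwise symmetry to full neutrality of $\tilde f$. Theorem~A applied to $\tilde f$ then delivers $\lambda\in\mathbb{R}^n_+\setminus\{0\}$ such that $\tilde f(t)\in\arg\max_{a\in A}\sum_i\lambda_i t_i^a$, and reversing the shift reproduces the claimed affine form $f(t)\in\arg\max_{a\in A}\bigl[\sum_i\lambda_i t_i^a - \kappa'(a)\bigr]$ for a suitable rescaling $\kappa'$ of $\kappa$.

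The hardest part, I expect, is the first step: producing, for each pair of alternatives, a single agent-independent scalar $c_{ab}$ when PAD a priori gives only a monotone two-alternative sub-problem, and verifying the cocycle identity across triples. This is where PAD and the unbounded domain are used most heavily: PAD gives monotonicity of the $a$-versus-$b$ region along the direction $u=t^a-t^b$, and the unboundedness of $\mathbb{R}^m$ lets us both isolate each two-alternative sub-problem by depressing the other utilities and push utilities arbitrarily far to establish that the indifference boundary is a genuine linear hyperplane with a common normal vector across all pairs. The secondary difficulty is the upgrade from pairwise symmetry of $\tilde f$ to full neutrality, which requires showing that permutation symmetries on $A$ compose consistently. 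Once both are established, the application of Theorem~A and the reverse translation are essentially bookkeeping.
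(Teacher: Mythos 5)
Your high-level plan --- construct a per-alternative constant $\kappa$, shift $f$ to a neutral $f^\kappa$, apply the neutrality version of Roberts' theorem, and unwind --- is exactly the decomposition the paper uses in Section~\ref{sec:gen}, and your instincts about where PAD and unboundedness are used are sound. The substantive difference lies in how $\kappa$ is produced and how neutrality of the shifted function is verified. The paper defines $\kappa(a)$ as the scalar threshold $\inf\{\epsilon : a\in C^f(1^a_\epsilon)\}$ against the single reference profile $0$, proves $C^f(1_\kappa)=A$ (Lemma~\ref{lem:zero}), and then establishes neutrality of $f^\kappa$ purely through the $P$-set algebra (Lemma~\ref{lem:psetlavi1}) and the characterization in Proposition~\ref{prop:neu1}, never invoking any linearity of the pairwise boundaries. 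Your route instead asks to show that each pairwise indifference locus is a hyperplane with a common normal across all pairs before $\kappa$ is even defined. That claim is effectively the conclusion you want Theorem~\ref{theo:newroberts} to deliver, so if you genuinely needed it for Step~1 your argument would be circular; and if you do not need it (as you in fact do not --- the scalar $c_{ab}$ only requires the monotone threshold on the diagonal direction, via PAD and binary independence), you should drop the hyperplane language and instead prove the cocycle identity $c_{ab}+c_{bc}=c_{ac}$ via the additivity of the $P$-sets, which is precisely what Lemma~\ref{lem:psetlavi1}(2) provides. In short, the paper's version of your Step~1 is leaner and avoids the structural claims you flag as hard; the rest of your outline (implementability invariance of the shift, carrying non-imposition across, upgrading pairwise symmetry to full neutrality via Proposition~\ref{prop:neu1}, then unwinding) matches the paper step for step.
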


This family of social choice functions are called {\bf affine maximizer social choice functions}.

\subsection{Non-Affine-Maximizers in Bounded Domains: An Example}

Here, we give an example to illustrate that Theorem \ref{theo:roberts} does not hold in bounded domains.
The example is due to \cite{mtv03}.

\begin{example}\label{ex:ex1}
\end{example}
Let $N=\{1,2\}$ and $A=\{a,b,c\}$. Suppose $T_1=T_2=(0,1)^3$ (alternatively, suppose
$\mathbb{D}=(0,1)^2$). Consider the following allocation rule $f$. Let
\begin{align*}
\mathbb{T}^g &= \{(t_1,t_2) \in
\mathbb{T}^2: t_1^c < t_1^b + 0.5\} \cup \{(t_1,t_2) \in \mathbb{T}^2: t^c_2 > t^b_2 - 0.5\}.
\end{align*}
Then,
\begin{displaymath}
f(t_1,t_2) = \left \{
\begin{array}{l l}
\arg \max \{-1.5+t_1^a+t_2^a, t_1^b+t_2^b,t_1^c+t_2^c\} & \forall~(t_1,t_2) \in \mathbb{T}^g \\
c & \forall~(t_1,t_2) \in \mathbb{T}^2 \setminus \mathbb{T}^g.
\end{array} \right.
\end{displaymath}
It can be verified that $f$ satisfies non-imposition. Further, the following payment rule $p$ implements $f$.
\begin{displaymath}
p_1(t_1,t_2) = \left \{
\begin{array}{l l}
t^a_2 & \textrm{if}~f(t_1,t_2)=a \\
\min \{1.5+t^b_2,2+t^c_2\} & \textrm{if}~f(t_1,t_2)=b \\
(1.5+t^c_2) & \textrm{if}~f(t_1,t_2)=c.
\end{array} \right.
\end{displaymath}
\begin{displaymath}
p_2(t_1,t_2) = \left \{
\begin{array}{l l}
t^a_1 & \textrm{if}~f(t_1,t_2)=a \\
(1.5+t^b_1) & \textrm{if}~f(t_1,t_2)=b \\
\min \{1.5+t^c_1,2+t^b_1\} & \textrm{if}~f(t_1,t_2)=c.
\end{array}
\right.
\end{displaymath}
However, one can verify that $f$ is not an affine maximizer. 
In the example above, it is essential to assume that there are at
least two agents. Roberts' theorem holds in any bounded domain if there is only a single agent \cite{Chung06}.
\cite{Chung06} refer to this characterization as {\em pseudo-efficiency}.

\subsection{Neutrality} \label{subsection:neutrality}

We restrict attention to neutral social choice functions which we now describe.
Neutrality roughly requires that
the mechanism designer should treat all allocations in $A$ symmetrically.
Given a social choice function $f$ we define the following set. For every $t \in \mathbb{T}^n$, the {\bf choice set} at $t$ is
defined as:
\begin{align*}
C^f(t) &= \{a \in A: ~\forall~\epsilon \gg 0~\textrm{and}~\forall~(t^a+\epsilon,t^{-a}) \in \mathbb{T}^n, f(t^a+\epsilon,t^{-a})=a\}.
\end{align*}

We first show that choice sets are non-empty under our assumptions of the domain ($m$-dimensional
open intervals).
\begin{lemma}\label{lem:new1}
Let $f$ be an implementable social choice function.
Then, for every $t \in \mathbb{T}^n$, $f(t) \in C^f(t)$.
\end{lemma}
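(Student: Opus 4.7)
The plan is to show that $a := f(t)$ itself lies in $C^f(t)$, which gives the conclusion. By definition of $C^f(t)$, I need to show: for every $\epsilon \gg 0$ with $(t^a+\epsilon, t^{-a}) \in \mathbb{T}^n$, we have $f(t^a+\epsilon, t^{-a}) = a$. This should be an almost immediate consequence of the PAD lemma already invoked in the paper.

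Concretely, I would fix such an $\epsilon$ and define the comparison profile $s$ by $s^a := t^a + \epsilon$ and $s^b := t^b$ for every $b \ne a$. Then on alternative $a$ we have $s^a - t^a = \epsilon$, which by assumption is $\gg 0$ componentwise, while on every other alternative $b \ne a$ we have $s^b - t^b = 0$. Consequently $s^a - t^a \gg s^b - t^b$ for all $b \ne a$, which is exactly the hypothesis needed to apply PAD at the pair $(t,s)$ with $f(t)=a$. PAD then yields $f(s) = a$, i.e.\ $f(t^a+\epsilon, t^{-a}) = a$.

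Since the $\epsilon$ above was arbitrary (subject to the profile being in $\mathbb{T}^n$), this establishes $a \in C^f(t)$ directly from the definition of the choice set, so $f(t) \in C^f(t)$. Implementability enters only via the earlier lemma that any implementable $f$ satisfies PAD; no further use of the payment function or of the open-interval structure is needed here, since the openness condition is already baked into the definition of $C^f(t)$ (which quantifies only over perturbations staying in the domain).

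The main obstacle, such as it is, is purely bookkeeping: one has to make sure that the hypothesis of PAD is satisfied \emph{strictly} in every coordinate (which is why the definition of $C^f$ uses $\epsilon \gg 0$ rather than $\epsilon \ge 0$), and one has to be careful that the perturbed profile lies in $\mathbb{T}^n$ --- but this is guaranteed by the quantifier in the definition of the choice set. Everything else is a one-line application of PAD.
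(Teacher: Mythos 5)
Your proof is correct and follows the paper's argument exactly: fix $\epsilon \gg 0$ with the perturbed profile in the domain, apply PAD to conclude $f(t^a+\epsilon, t^{-a}) = a$, and note this holds for all such $\epsilon$, so $f(t) \in C^f(t)$. The only difference is expository detail; the substance is the same.
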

\begin{proof}
Consider $t \in \mathbb{T}^n$, and let $f(t)=a$. Let $s=(s^a=t^a+\epsilon,s^{-a}=t^{-a})$ for some
$\epsilon \gg 0$ such that $s \in \mathbb{T}^n$. By PAD, $f(s)=a$. Hence, $a \in C^f(t)$.
\end{proof}

Using the notion of a choice set, we define a neutral social choice function. In Appendix B,
we discuss an alternate (but more standard) notion of neutrality, which we call {\em scf-neutrality},
defined directly on the social choice function, and show that scf-neutrality on implementable social choice functions
implies the following notion of neutrality.

\begin{defn}
A social choice function $f$ is {\bf neutral} if for every type profile $t \in \mathbb{T}^n$,
every permutation $\rho$ of $A$ and type profile $s$ induced by permutation $\rho$~\footnote{
Given a permutation $\rho$ of $A$ and a type profile $t$, the type profile induced
by permutation $\rho$ is the profile obtained from $t$ by relabeling the columns based
on permutation $\rho$.},
we have $C^f(s)=\{\rho(a): a \in C^f(t)\}$.
\end{defn}

A neutral social choice function does not discriminate
between social alternatives by their names. In many settings this is a natural
assumption. For instance, consider a city planner who has the following options
to improve public facilities in the city: (a) to build an opera house, (b) to build a public
school, (c) to build a park.
Although residents
of the city can have different (private) valuations over these alternatives, it is perfectly
reasonable to assume that the city planner has no preferences over
them.

Non-imposition is implied by neutrality in $m$-dimensional open interval domains.

\begin{lemma}\label{lem:ni}
Suppose $f$ is an implementable social choice function. If $f$ is neutral then it satisfies non-imposition.
\end{lemma}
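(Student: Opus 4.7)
The plan is to show that every alternative lies in the range of $f$ by starting from an arbitrary type profile, using neutrality to permute the chosen alternative to any desired one, and then using the openness of the domain to turn choice-set membership into actual selection.

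First I would fix any $t \in \mathbb{T}^n$ and set $b = f(t)$. By Lemma \ref{lem:new1}, $b \in C^f(t)$. Given an arbitrary target alternative $a \in A$, I would pick any permutation $\rho$ of $A$ with $\rho(b) = a$. Property 1 of the $m$-dimensional open interval domain (closure under permutation of the columns $t^a$ and $t^b$, extended to arbitrary permutations since they decompose into transpositions) guarantees that the type profile $s$ induced by $\rho$ lies in $\mathbb{T}^n$. Neutrality then yields $a = \rho(b) \in C^f(s)$.

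The final step is to translate $a \in C^f(s)$ into an actual equality $f(\cdot) = a$. By property 2 of the domain (openness), there exists $\epsilon \gg 0$ such that $s' := (s^a + \epsilon, s^{-a}) \in \mathbb{T}^n$. The definition of $C^f(s)$ then gives $f(s') = a$, which establishes non-imposition.

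The main potential obstacle is conceptual rather than computational: one must remember that choice-set membership is only a conditional statement about what happens after a strict increase in the coordinate of $a$, so neutrality alone does not directly deliver $f(s) = a$. The role of the openness assumption is precisely to supply a feasible $\epsilon \gg 0$ that realizes the conditional conclusion inside the domain. Once this is recognized, the argument reduces to a clean composition of Lemma \ref{lem:new1}, the permutation-closure of the domain, the definition of neutrality, and openness.
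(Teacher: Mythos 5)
Your proof is correct and takes essentially the same route as the paper: start from any profile, use Lemma \ref{lem:new1} to place $f(t)$ in the choice set, invoke neutrality under a permutation sending $f(t)$ to the target alternative $a$, and then use openness to perturb and realize $f(\cdot)=a$. The only cosmetic difference is that you allow an arbitrary permutation $\rho$ with $\rho(b)=a$ whereas the paper uses the specific transposition swapping $a$ and $b$; both work and rest on the same three ingredients.
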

\begin{proof}
Fix an alternative $a \in A$. Consider any arbitrary type
profile $t$ such that $f(t)=b \ne a$. By Lemma \ref{lem:new1},
$b \in C^f(t)$. Now, construct another type
profile $s=(s^a=t^b,s^b=t^a,s^{-ab}=t^{-ab})$. By neutrality,
$a \in C^f(s)$. Now, let $u=(u^a=s^a+\epsilon,u^{-a}=s^{-a})$
for some $\epsilon \in \mathbb{R}^n_{++}$. Since
$a \in C^f(s)$, we have that $f(u)=a$. Hence, $f$ satisfies non-imposition.
\end{proof}

Under neutrality, Roberts' theorem is modified straightforwardly as follows (see \cite{Lavi07}).
\begin{theorem}[\cite{Roberts79}]\label{theo:nroberts}
Suppose $T_i=\mathbb{R}^m$ for all $i \in N$. If $f$ is an implementable social choice function and satisfies neutrality,
then there exists weights $\lambda \in \mathbb{R}^n_+ \setminus \{0\}$ such that for all $t \in \mathbb{T}^n$,
\begin{align*}
f(t) &\in \arg \max_{a \in A}\sum_{i \in N}\lambda_it_i^a
\end{align*}
\end{theorem}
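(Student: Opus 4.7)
The plan is to execute the three-step social welfare ordering program laid out in the introduction. In step S1, I define a binary relation $\succeq$ on $\mathbb{R}^n$ by declaring $x \succeq y$ whenever there exists a profile $t \in \mathbb{T}^n$ and distinct alternatives $a,b \in A$ with $t^a = x$, $t^b = y$ and $a \in C^f(t)$. The main task is to show that $\succeq$ is a well-defined complete transitive ordering on $\mathbb{R}^n$. Completeness follows from Lemma \ref{lem:new1}; independence from the particular labels $a,b$ assigned to the two columns is exactly neutrality applied to the transposition $(a\,b)$; independence from the other $m-2$ columns comes from fixing them at very low values (possible since $T_i = \mathbb{R}^m$) and using PAD to rule them out of the choice set, together with neutrality-based permutations among the irrelevant columns. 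Transitivity exploits $m \geq 3$: given $x \succeq y$ and $y \succeq z$, I construct a profile whose columns include $x, y, z$ (with remaining columns pushed very low), use Lemma \ref{lem:new1} to extract a chosen alternative, and use PAD to rule out $y$ and $z$ being selected, forcing $x \succeq z$.

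In step S2 I verify three properties of $\succeq$. Weak Pareto ($x \gg y \Rightarrow x \succ y$) follows directly from PAD applied to a two-column construction. Invariance ($x \succeq y \iff x+c \succeq y+c$ for every $c \in \mathbb{R}^n$) follows from the quasi-linear structure of implementability: adding a constant to a single agent's valuation for every alternative leaves every incentive comparison intact, which after invoking PAD on the shifted profile yields translation invariance on $\succeq$. Continuity (both contour sets $\{x : x \succ y\}$ and $\{x : y \succ x\}$ are open) is again a consequence of PAD, because strict comparisons are preserved under sufficiently small perturbations of either argument.

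In step S3 I invoke the classical representation theorem for orderings on $\mathbb{R}^n$ satisfying weak Pareto, translation invariance, and continuity (see \cite{despremont77, Blackorby84, despremont02}): every such ordering admits the form $x \succeq y \iff \sum_i \lambda_i x_i \geq \sum_i \lambda_i y_i$ for some $\lambda \in \mathbb{R}^n_+ \setminus \{0\}$. Unwinding the definition of $\succeq$ together with the fact that $f(t) \in C^f(t)$ (Lemma \ref{lem:new1}), the alternative selected by $f$ at any profile $t$ must lie in $\arg\max_{a \in A}\sum_i \lambda_i t_i^a$, which is precisely the conclusion of the theorem.

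The main obstacle is step S1, specifically establishing that the binary relation is well-defined (independent of the $m-2$ irrelevant columns) and transitive. Neutrality by itself delivers only label-symmetry and not the full IIA-style independence required; the key is to combine neutrality with PAD and exploit $T_i = \mathbb{R}^m$ to drive irrelevant columns arbitrarily low so that they can never be chosen, and hence cannot influence the binary comparison between $x$ and $y$. Once the ordering is cleanly in hand, verifying the axioms in step S2 and applying the representation theorem in step S3 are comparatively routine.
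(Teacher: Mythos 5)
Your proposal follows exactly the paper's three-step program (induce a social welfare ordering, show it satisfies weak Pareto/invariance/continuity, invoke the linear representation theorem), and is essentially identical to how the paper proves the more general Theorem~\ref{theo:newroberts}, of which Theorem~\ref{theo:nroberts} is the $T_i=\mathbb{R}^m$ special case. The only substantive difference worth flagging is in the hardest step: you propose to obtain well-definedness and transitivity of the induced relation by pushing irrelevant columns arbitrarily low (available because $T_i=\mathbb{R}^m$), but driving a column down does not by itself exclude it from the choice set under PAD alone --- the paper's Proposition~\ref{lem:new3} (Binary Independence) instead uses an intermediate profile built from componentwise minima of the two profiles together with a staggered-$\epsilon$ PAD argument, and your sketch would need to recreate that kind of argument (or establish ``low columns are never chosen'' via a weak-Pareto-like construction first) to close the gap. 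For transitivity the paper avoids the issue entirely by filling every column with values from $\{x,y,z\}$ rather than introducing additional low columns, which is a cleaner route you could adopt.
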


A striking aspect of this theorem is that it gives a precise functional form of the neutral social choice functions
that can be implemented. This family of social choice functions is called the {\bf weighted welfare maximizer social
choice functions}. If all the weights ($\lambda_i$s) are equal in a weighted welfare maximizer social
choice function, then we get the {\bf efficient social choice function}.

\section{An Induced Social Welfare Ordering}

In the aggregation theory literature, an axiom called ``binary independence"
is extensively used - see \cite{despremont02}. Roughly, it says that the comparision
between two alternatives $a$ and $b$ should only depend on the utility (column) vectors
corresponding to $a$ and $b$. We prove a counterpart of this axiom for our choice
set for $m$-dimensional open interval domains.

\begin{prop}[Binary Independence]\label{lem:new3}
Let $f$ be an implementable social choice function. 
Consider two type profiles $t=(t^a,t^b,t^{-ab}), s=(s^a=t^a,s^b=t^b,s^{-ab})$.
\begin{itemize}
\item[a)] Suppose $a,b \in C^f(t)$. Then, $a \in C^f(s)$ if and only if $b \in C^f(s)$.
\item[b)] Suppose $a \in C^f(t)$ but $b \notin C^f(t)$. Then $b \notin C^f(s)$.
\end{itemize}
\end{prop}
\begin{proof}
Suppose $a,b \in C^f(t)$. Now, consider a type profile $u=(u^a=t^a,u^b=t^b,u^{-ab})$, where $u_i^c=\min (t^c_i,s^c_i)$ for all $i \in M$ and for all $c \notin \{a,b\}$. Note that since $\mathbb{D}$ is an open
rectangle in $\mathbb{R}^n$, $u \in \mathbb{D}^m$.
\begin{itemize}

\item[a)] Suppose $a,b \in C^f(t)$. We will first show that $a,b \in C^f(u)$. Choose an $\epsilon \gg 0$. Since $a \in C^f(t)$, we know that $f(t^a+\frac{\epsilon}{2},t^b,t^{-ab})=a$. By PAD, $f(t^a+\epsilon,t^b,u^{-ab})=a$.
Hence, $a \in C^f(u)$. Using an analogous argument, $b \in C^f(u)$.

Now, suppose that $a \in C^f(s)$ and assume for contradiction $b \notin C^f(s)$.
Choose an $\epsilon \gg 0$ and arbitrarily close to zero. We show that
$f(t^a+2\epsilon,t^b+3\epsilon,s^{-ab}) \ne b$. Assume for contradiction,
$f(t^a+2\epsilon, t^b + 3\epsilon,s^{-ab})=b$. By PAD, $f(t^a,
t^b+4\epsilon,s^{-ab})=b$. Since $\epsilon$ can be made arbitrarily
small, this implies that $b \in C^f(s)$. This is a contradiction.

Next, we show that $f(t^a+2\epsilon,t^b+3\epsilon,s^{-ab}) \ne c$ for any
$c \notin \{a,b\}$. Assume for contradiction $f(t^a+2\epsilon,t^b+3\epsilon,s^{-ab})=c$
for some $c \notin \{a,b\}$. By PAD, $f(t^a+2\epsilon,t^b,s^c+\frac{\epsilon}{2},s^{-abc})=c$.
Also, since $a \in C^f(s)$, we know that $f(t^a+\epsilon,t^b,s^c,s^{-abc})=a$. By PAD,
$f(t^a+2\epsilon,t^b,s^c+\frac{\epsilon}{2},s^{-abc})=a$. This is a contradiction.

Hence, $f(t^a+2\epsilon,t^b+3\epsilon,s^{-ab})=a$. By PAD,
$f(t^a+\frac{5\epsilon}{2},t^b+3\epsilon,u^{-ab})=a$. We show that
$f(t^a,t^b+\epsilon',u^{-ab}) \ne b$ for all $0 \ll \epsilon' \ll \frac{\epsilon}{2}$.
Assume for contradiction $f(t^a,t^b+\epsilon',u^{-ab})=b$ for some
$0 \ll \epsilon' \ll \frac{\epsilon}{2}$. By PAD, $f(t^a+\frac{5\epsilon}{2},t^b+3\epsilon,u^{-ab})=b$.
This is a contradiction. Hence, $f(t^a,t^b+\epsilon',u^{-ab}) \ne b$ for some
$\epsilon' \gg 0$. This implies that $b \notin C^f(u)$, which is a contradiction.
Hence, $a \in C^f(s)$ implies that $b \in C^f(s)$.

Now, suppose that $a \notin C^f(s)$. Assume for contradiction
$b \in C^f(s)$. Exchanging the role of $a$ and $b$ above,
we get that $a \in C^f(s)$. This is a contradiction. Hence, if $a \notin C^f(s)$
then $b \notin C^f(s)$. This implies that either $\{a,b\} \subseteq C^f(s)$
or $\{a,b\} \cap C^f(s) = \emptyset$.

\item[b)] Suppose $a \in C^f(t)$ but $b \notin C^f(t)$. As in part (a),  $a \in C^f(u)$. Now, assume for contradiction,
$b \in C^f(s)$. If $a \notin C^f(s)$, then exchanging the role of $a$ and
$b$ in the second half of (a), we get that $a \notin C^f(u)$. This is a
contradiction. If $a \in C^f(s)$, then we have $a,b \in C^f(s)$ but
$a \in C^f(t)$. By part (a), $b \in C^f(t)$. This is a contradiction.

\end{itemize}
\end{proof}

We will define an ordering on $\mathbb{D}$ induced by an implementable 
social choice function. In general, we will refer to an arbitrary ordering $R$ 
on $\mathbb{D}$. The symmetric component of an ordering $R$ will be 
denoted as $I$ and the anti-symmetric component will be denoted as $P$.
Note that a social choice function $f$ is a mapping $f:\mathbb{T}^n \rightarrow A$. 
Hence, for every type profile $t$, a social choice function can be thought of as 
picking a column vector (which belongs to $\mathbb{D}$) in $t$. We will
show that in the process of picking these column vectors in $\mathbb{D}$ in 
an ``implementable manner", a neutral social choice function induces a social welfare 
ordering.

The following is a useful lemma that we will use in the proofs.
\begin{lemma}\label{lem:permute2}
Suppose $f$ is an implementable and neutral social choice function. 
Consider a type profile $t \in \mathbb{T}^n$ such that $t^a=t^b$ for
some $a,b \in A$. Then, $a \in C^f(t)$ if and only if $b \in C^f(t)$.
\end{lemma}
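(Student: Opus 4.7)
The proof should be essentially immediate from the definition of neutrality, once we apply it to the right permutation. The plan is to use the transposition $\rho$ of $A$ that swaps $a$ and $b$ and fixes every other alternative, and to observe that because $t^a = t^b$, the type profile induced by $\rho$ coincides with $t$ itself.

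More concretely, first I would let $\rho$ be the transposition of $a$ and $b$ on $A$, and let $s$ denote the type profile induced from $t$ by $\rho$. By the definition of ``induced by permutation'', $s$ is obtained from $t$ by relabeling the columns according to $\rho$, so $s^a = t^b$, $s^b = t^a$, and $s^c = t^c$ for every $c \notin \{a,b\}$. Using the hypothesis $t^a = t^b$, each of these equalities reduces to $s^c = t^c$ for all $c \in A$, i.e.\ $s = t$.

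Next I would invoke the neutrality of $f$, which gives $C^f(s) = \{\rho(c) : c \in C^f(t)\}$. Substituting $s = t$ yields $C^f(t) = \rho(C^f(t))$, so $C^f(t)$ is invariant under the transposition $\rho$. Since $\rho$ swaps $a$ and $b$, this invariance directly says that $a \in C^f(t)$ iff $\rho(a) = b \in C^f(t)$, which is the desired conclusion.

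I do not anticipate any real obstacle here: the entire content of the lemma is that neutrality, applied to a profile with two equal columns, must force symmetry in the choice set at that profile. The only mildly delicate point is making sure the relabeling convention is applied correctly so that $s = t$; once that is verified, the biconditional is immediate from the neutrality axiom. No appeal to PAD, implementability, or the structure of $\mathbb{D}$ is required beyond what is already built into the definition of $C^f$ and the neutrality hypothesis.
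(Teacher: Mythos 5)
Your proposal is correct and is essentially identical to the paper's own proof: both observe that because $t^a=t^b$, the transposition of $a$ and $b$ induces the same profile $t$, so neutrality forces $C^f(t)$ to be invariant under that transposition, giving the biconditional. Your write-up is simply a more detailed spelling out of the paper's two-line argument.
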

\begin{proof}
This follows from the fact that permuting columns $a$ and $b$ in $t$ produces $t$ again.
Hence, by neutrality, $a \in C^f(t)$ if and only if $b \in C^f(t)$.
\end{proof}

\begin{defn}
A {\bf social welfare ordering} $R^f$ induced by a social choice function
$f$ is a relation on $\mathbb{D}$ defined as follows.
The symmetric component of $R^f$ is denoted by $I^f$ and the
antisymmetric component of $R^f$ is denoted by $P^f$. Pick $x,y \in \mathbb{D}$.

We say $x P^f y$ if and only if there exists a profile $t$ with $t^a=x$ and $t^b=y$
for some $a,b \in A$ such that $a \in C^f(t)$ but $b \notin C^f(t)$.

We say $x I^f y$ if and only if there exists a profile $t$ with $t^a=x$ and $t^b=y$
for some $a,b \in A$ such that $a,b \in C^f(t)$.
\end{defn}

\begin{prop}[Social Welfare Ordering]\label{prop:order}
Suppose $f$ is an implementable and neutral social choice function.
Then, the relation $R^f$ induced by $f$ on $\mathbb{D}$ is an ordering.
\end{prop}
\begin{proof}
We first show that $R^f$ is well-defined. Pick $x,y \in \mathbb{D}$. We consider two cases. \\

\noindent {\sc Case 1:} Suppose $x P^f y$. Then there exists a type profile $t$ and some $a,b \in A$ with
$t^a=x$ and $t^b=y$ such that $a \in C^f(t)$ but $b \notin C^f(t)$. Consider any other type 
profile $s$ such that $s^a=x$ and $s^b=y$. By Proposition \ref{lem:new3}, $b \notin C^f(s)$. 
Consider any other profile $u$ and $(c,d) \ne (a,b)$ such that $u^c=x$ and $u^d=y$. 
We can permute $u$ to get another profile $v$ such that $v^a=x$ and $v^b=y$. By 
Proposition \ref{lem:new3}, $b \notin C^f(v)$. By neutrality, $d \notin C^f(u)$. 
Hence, the choice of $a$ and $b$ is without loss of generality,
i.e., for any $a,b \in A$ and any $t \in \mathbb{T}^n$ with $t^a=x$ and $t^b=y$, we 
have $b \notin C^f(t)$. So, $P^f$ is well-defined. \\

\noindent {\sc Case 2:} Suppose $x I^f y$. Then there exists a type profile $t$ and 
some $a,b \in A$ such that $a,b \in C^f(t)$. Consider any other type profile $s$ 
such that $s^a=x$ and $s^b=y$. By Proposition \ref{lem:new3}, $a \in C^f(s)$ if and only if 
$b \in C^f(s)$. By neutrality (as in Case 1), the choice of $a$ and $b$ is without loss 
of generality. This shows that $I^f$ is well-defined. \\

We next show that $R^f$ is reflexive.
Consider $x \in \mathbb{D}$ and the profile where $t^a=x$ for all $a \in A$.
By Lemma \ref{lem:permute2}, $C^f(t)=A$. Hence, $x I^f x$.

Next, we show that $R^f$ is complete. Choose $x,y \in \mathbb{D}$. Consider
a type profile $t$ where each column vector is either $x$ or $y$ with at least one column vector
being $x$ and at least one column vector being $y$. Suppose $f(t)=a$. 
Then either
$t^a=x$ or $t^a=y$. Without loss of generality, let $f(t)=a$ and $t^a=x$.
By Lemma \ref{lem:permute2}, there are two cases to consider. \\

\noindent {\sc Case 1:} For all $b$ with $t^b=y$ we have $b \in C^f(t)$.
Hence, $x I^f y$. \\

\noindent {\sc Case 2:} For all $b$ with $t^b=y$ we have $b \notin C^f(t)$.
Then, we get $x P^f y$. \\

This completes the argument that $R^f$ is complete, and hence, a binary relation.
Now, we prove that $R^f$ is transitive. Consider $x,y, z \in \mathbb{D}$.
Consider a type profile $t$, where each column has value in $\{x,y,z\}$ with at least
one column having value $x$, at least one column having value $y$, and at
least one column having value $z$ (this is possible since $|A|=m \ge 3$).

Due to Proposition \ref{lem:new3} and neutrality, 
without loss of generality let $t^a=x,t^b=y,t^c=z$.
We prove transitivity of $P^f$ and $I^f$, and this implies transitivity of $R^f$.
\\

\noindent {\sc Transitivity of $P^f$:} Suppose $x P^f y$ and $y P^f z$. This implies
that $a \in C^f(t)$ but $b \notin C^f(t)$.
Since $y P^f z$, we get that $c \notin C^f(t)$. Since $c \notin C^f(t)$, we have $x P^f z$. \\

\noindent {\sc Transitivity of $I^f$:} Suppose $x I^f y$ and $y I^f z$. This implies that
$a,b \in C^f(t)$. But $y I^f z$ implies that $c \in C^f(t)$ too. This implies that $x I^f z$.

\end{proof}

\section{Properties of the Induced Social Welfare Ordering}

In this section, we fix an implementable neutral social choice function $f$. We then 
prove that the social welfare ordering $R^f$ defined in the last section satisfies three 
specific properties.

\begin{defn}
An ordering $R$ on $\mathbb{D}$ satisfies {\bf weak Pareto} if  for all
$x,y \in \mathbb{D}$ with $x \gg y$ we have $x P y$.
\end{defn}

\begin{defn}
An ordering $R$ on $\mathbb{D}$ satisfies {\bf invariance} if for all
$x,y \in \mathbb{D}$ and all $z \in \mathbb{R}^n$ such that
$(x+z),(y+z) \in \mathbb{D}$ we have $x P y$ implies $(x+z)  P (y+z)$
and $x I y$ implies $(x+z) I (y+z)$.
\end{defn}

\begin{defn}
An ordering $R$ on $\mathbb{D}$ satisfies {\bf continuity} if for all
$x \in \mathbb{D}$, the sets $U^x=\{y \in \mathbb{D}:y R x\}$
and $L^x=\{y \in \mathbb{D}:x R y\}$ are closed in $\mathbb{D}$.
\end{defn}

\begin{prop}[Axioms for Social Welfare Ordering]\label{prop:axioms}
Suppose $f$ is an implementable and neutral social choice function.
Then the social welfare ordering $R^f$ induced by $f$ on $\mathbb{D}$
satisfies weak Pareto, invariance, and continuity.
\end{prop}
\begin{proof}
We show that $R^f$ satisfies each of the properties. \\

\noindent {\sc Weak Pareto:} Choose $x,y \in \mathbb{D}$ such that $x \gg y$.
Start with a profile $t$ where $t^a=y$ for all $a \in A$. Suppose $f(t)=b$. Consider 
another profile $s=(s^b=x,s^{-b}=t^{-b})$ (i.e. column vector corresponding to $b$ 
is changed from $y$ to $x$). By PAD, $f(s)=b$ and hence $b \in C^f(s)$. We show 
that for any $a \ne b$ we have $a \notin C^f(s)$. Choose $\epsilon \gg 0$ but 
$\epsilon \ll x-y$. By PAD, $f(t^a+\epsilon,s^b=x,t^{-ab})=b$. Hence, $a \notin C^f(s)$. 
This shows that $b \in C^f(s)$ but $a \notin C^f(s)$. Hence, by Proposition 
\ref{prop:order}, $x P^f y$. \\

\noindent {\sc Invariance:} Choose $x,y \in \mathbb{D}$ and $z \in \mathbb{R}^n$ 
such that $(x+z),(y+z) \in \mathbb{D}$. We consider two cases. \\

\noindent {\sc Case 1:} Suppose $x P^f y$. We show that $(x+z) P^f (y+z)$. 
Since $x P^f y$, there exists a profile $t=(t^a=x,t^b=y,t^{-ab})$ such that $a \in C^f(t)$ 
but $b \notin C^f(t)$. Consider the profile $s$, where $s^c=t^c+z$ for all $c \in A$. 
Fix $\epsilon \gg 0$. Since $a \in C^f(t)$, $f(t^a+\frac{\epsilon}{2},t^b,t^{-ab})=a$. 
Hence, by PAD $f(s^a+\epsilon,s^b,s^{-ab})=a$. This shows that $a \in C^f(s)$. 
Since $b \notin C^f(t)$, there is some $\epsilon \gg 0$ such that 
$f(t^a,t^b+\epsilon,t^{-ab}) \ne b$. We show that 
$f(s^a,s^b+\frac{\epsilon}{2},s^{-ab}) \ne b$. Assume for 
contradiction $f(s^a,s^b+\frac{\epsilon}{2},s^{-ab})=b$. By PAD, 
$f(t^a,t^b+\epsilon,t^{-ab})=b$. This is a contradiction. Hence, 
$f(s^a,s^b+\frac{\epsilon}{2},s^{-ab}) \ne b$. This implies that 
$b \notin C^f(s)$. Using Proposition \ref{prop:order}, $(x+z) P^f (y+z)$. \\

\noindent {\sc Case 2:} Suppose $x I^f y$. We show that $(x+z) I^f (y+z)$. 
Then, there exists a profile $t=(t^a=x,t^b=y,t^{-ab})$ such that $a,b \in C^f(t)$. 
Consider the profile $s$, where $s^c=t^c+z$ for all $c \in A$. Fix $\epsilon \gg 0$. 
Since $a \in C^f(t)$, $f(t^a+\frac{\epsilon}{2},t^b,t^{-ab})=a$. Hence, 
by PAD $f(s^a+\epsilon,s^b,s^{-ab})=a$. This shows that $a \in C^f(s)$. 
Using an analogous argument, $b \in C^f(s)$. Hence, by Proposition 
\ref{prop:order}, $(x+z) I^f (y+z)$. \\

\noindent {\sc Continuity:} Fix $x \in \mathbb{D}$. We show that the 
set $U^x=\{y \in \mathbb{D}:y R^f x\}$ is closed. Take an infinite 
sequence $y_1,y_2,\ldots$ such that every point $y_n$ in this sequence 
satisfies $y_n R^f x$. Let this sequence converge to $z \in \mathbb{D}$. 
Assume for contradiction $x P^f z$. Consider a type profile $t$ such that 
$t^a=x$ and $t^c=z$ for all $c \ne a$. Since $x P^f z$, we have 
$c \notin C^f(t)$ for all $c \ne a$. Hence, $C^f(t)=\{a\}$.

Consider $b \ne a$. Since $b \notin C^f(t)$, we know that there 
exists $\epsilon \gg 0$ and $\epsilon$ arbitrarily close to the zero 
vector such that $f(t^a,t^b+\epsilon,t^{-ab}) \ne b$. We show that 
$f(t^a,t^b+\epsilon,t^{-ab}) \ne c$ for all $c \notin \{a,b\}$. Assume 
for contradiction $f(t^a,t^b+\epsilon,t^c,t^{-abc})=c$ for some 
$c \notin \{a,b\}$. Then, by PAD, $f(t^a,t^b,t^c+\epsilon'',t^{-abc})=c$ 
for all $\epsilon'' \gg 0$. This implies that $c \in C^f(t)$, which is a 
contradiction. Hence, $f(t^a,t^b+\epsilon,t^{-ab})=a$.

This implies that $xR^f(z+\epsilon)$.
Since the sequence converges to $z$, there is a point $z' \in \mathbb{D}$ 
arbitrarily close to $z$ such that $z' R^f x$. Since $z$ is arbitrarily close 
to $z'$, by weak Pareto, $(z+\epsilon) P^f z'$. Using $z' R^f x$, we 
get $(z+\epsilon) P^f x$. This is a contradiction to the fact that $x R^f (z+\epsilon)$.

To show $L^x=\{y \in \mathbb{D}:x R^f y\}$ is closed, take an infinite 
sequence $y_1,y_2,\ldots$ such that every point $y_n$ in this sequence 
satisfies $x R^f y_n$. Let this sequence converge to $z$. Assume for 
contradiction $z P^f x$. Interchanging the role of $x$ and $z$ in the 
previous argument, we will get $z R^f (x+\epsilon)$ for some $\epsilon \gg 0$. 
Since the sequence converges to $z$, there is a point $z' \in \mathbb{D}$ 
arbitrarily close to $z$ such that $x R^f z'$. Since $z'$ is arbitrarily close to 
$z$, $(x+\epsilon) P^f z$ by weak Pareto. This is a contradiction to the 
fact that $z R^f (x+\epsilon)$.
\end{proof}

\section{Multi-dimensional Open Interval Domains}

In this section, we prove the main result.
In particular, we prove a proposition related to 
linear utility representation on open and convex sets.

\begin{prop}[Representation of Social Welfare Ordering]
\label{prop:reporder}
Suppose an ordering $R$ on $\mathbb{D}$ satisfies weak Pareto, invariance,
and continuity. If $\mathbb{D}$ is open and convex, then there exists weights
$\lambda \in \mathbb{R}^n_+ \setminus \{0\}$ and for all $x,y \in \mathbb{D}$
\begin{align*}
x R y &\Leftrightarrow \sum_{i \in N}\lambda_ix_i \ge \sum_{i \in N}\lambda_iy_i.
\end{align*}
\end{prop}
\begin{proof}
Fix any $z \in \mathbb{D}$. Denote $U^z=\{x:x R z\}$,
$L^z=\{x: z R x\}$, $\mathbb{D} \setminus L^z = \{x: x P z\}$,
and $\mathbb{D} \setminus U^z=\{x: z P x\}$. \\

\noindent {\sc Step 1:} We first show that the sets $U^z,L^z,\mathbb{D}
\setminus U^z$, and $\mathbb{D} \setminus L^z$ are convex.
We make use of the following fact here.

\begin{fact}\label{fact:f1}
Consider a set $X \subseteq \mathbb{D}$
and let $X$ satisfy the property that if $x,y \in X$ then $\frac{x+y}{2} \in X$.
If $X$ is open in $\mathbb{D}$ or closed in $\mathbb{D}$, then
$X$ is convex.
\end{fact}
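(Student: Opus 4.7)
The plan is to reduce both cases---$X$ closed in $\mathbb{D}$ and $X$ open in $\mathbb{D}$---to showing that, for arbitrary fixed $x,y\in X$, the set
\[
  \Lambda := \{\lambda\in[0,1] : \lambda x+(1-\lambda)y\in X\}
\]
equals $[0,1]$. Iterating the midpoint property, if $\lambda_1,\lambda_2\in\Lambda$ then $(\lambda_1+\lambda_2)/2\in\Lambda$; starting from $0,1\in\Lambda$ this yields every dyadic rational in $[0,1]$, so $\Lambda$ is dense in $[0,1]$. Because $\mathbb{D}$ is an open rectangle and therefore convex, the whole segment lies in $\mathbb{D}$, so the map $\lambda\mapsto \lambda x+(1-\lambda)y$ from $[0,1]$ to $\mathbb{D}$ is continuous.

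The closed case is immediate. If $X$ is closed in $\mathbb{D}$, then $\Lambda$ is the continuous preimage of a closed subset of $\mathbb{D}$, hence closed in $[0,1]$; a closed subset of $[0,1]$ that contains a dense set must equal $[0,1]$.

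The open case is where the real work lies. Now $\Lambda$ is open in $[0,1]$ and dense. Suppose for contradiction that some $\lambda^*\in(0,1)$ does not lie in $\Lambda$. The idea is to exploit the midpoint property via a reflection around $\lambda^*$: any two parameters $\mu$ and $2\lambda^*-\mu$ that both lie in $\Lambda$ force $\lambda^*\in\Lambda$, since $\lambda^* x+(1-\lambda^*)y$ is exactly the midpoint of $\mu x+(1-\mu)y$ and $(2\lambda^*-\mu)x+(1-2\lambda^*+\mu)y$, both of which then lie in $X$. Let $I := [\max(0,2\lambda^*-1),\min(1,2\lambda^*)]$, a non-degenerate subinterval of $[0,1]$ containing $\lambda^*$ in its interior; the reflection $\mu\mapsto 2\lambda^*-\mu$ is a homeomorphism of $I$ onto itself by construction of $I$. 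Both $\Lambda\cap I$ and $(2\lambda^*-\Lambda)\cap I$ are therefore open in $I$, and both are dense in $I$ (the first because $\Lambda$ is open and contains all dyadic rationals, the second because it is the image of the first under the reflection homeomorphism). The intersection of two open dense subsets of a nonempty interval is dense, hence nonempty, so there exists $\mu\in\Lambda$ with $2\lambda^*-\mu\in\Lambda$; midpoint-closedness then places $\lambda^* x+(1-\lambda^*)y$ in $X$, contradicting $\lambda^*\notin\Lambda$.

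The main obstacle is the open case: the reflection trick is essential, because openness of $X$ does not let us pass to limits of dyadic approximations directly as the closed case does. The delicate point is the choice of $I$, which is engineered so that the reflection about $\lambda^*$ stabilizes $I$; once that is in place, the elementary topological fact about intersections of open dense sets closes the argument and no Baire-category machinery is needed.
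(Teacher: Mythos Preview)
Your proof is correct. The closed case is essentially the same idea as the paper's---both use that the dyadic convex combinations of $x$ and $y$ lie in $X$ and are dense on the segment, then conclude by closedness; the paper phrases this as an explicit bisection producing a dyadic midpoint inside a forbidden ball $B_\delta(z)\subseteq \mathbb{D}\setminus X$, while you pass through the parametrization $\Lambda$ and invoke that a closed dense subset of $[0,1]$ is all of $[0,1]$.

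For the open case the two arguments genuinely differ. The paper works geometrically: openness gives balls $B_{\delta}(x),B_{\delta}(y)\subseteq X$, and the midpoint-of-balls observation propagates a ball of the same radius $\delta$ to every dyadic combination of $x$ and $y$; running the bisection of the closed case then places $z$ inside one of these balls. Your route is more topological: you note that $\lambda^*\in\Lambda$ would follow from exhibiting any $\mu$ with both $\mu$ and $2\lambda^*-\mu$ in $\Lambda$, and you manufacture such a $\mu$ by intersecting the open dense sets $\Lambda\cap I$ and $(2\lambda^*-\Lambda)\cap I$ on a reflection-invariant subinterval $I$. The paper's version is slightly more hands-on and reuses the bisection machinery already built for the closed case; yours isolates the one-dimensional content cleanly, avoids tracking ball radii through the iteration, and makes no use of metric structure beyond the affine parametrization of the segment.
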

The proof of this fact is given in the Appendix A. By continuity, each of the
sets $U^z, L^z, \mathbb{D} \setminus U^z$, and $\mathbb{D} \setminus
L^z$ are either open or closed in $\mathbb{D}$. Hence, by Fact
\ref{fact:f1}, we only need to verify that these sets are closed under the midpoint
operation.

Consider $U^z$. Now, let $x,y \in \mathbb{D}$ such that
$x R z$ and $y R z$. We will show that $\frac{x+y}{2} R z$.
Note that $\frac{x+y}{2} \in \mathbb{D}$ because
$\mathbb{D}$ is convex. Now, assume for contradiction that $z P \frac{x+y}{2}$. This implies that
$x P \frac{x+y}{2}$ and $y P \frac{x+y}{2}$. By invariance,
$x + \frac{y-x}{2} P \frac{x+y}{2} + \frac{y-x}{2}$. Hence,
$\frac{x+y}{2} P y$. This is a contradiction. Hence,
the set $U^z$ is convex.

Similar arguments show that
$L^z$, $\mathbb{D} \setminus L^z$, and
$\mathbb{D} \setminus U^z$ are convex. \\

\noindent {\sc Step 2:} We now show that $z$ is a boundary point
of $U^z$. Let $B_{\delta}(z)=\{x: \Vert x-z \Vert < \delta\}$, where $\delta \in \mathbb{R}_+$. Since $\mathbb{D}$
is open, there exists $\epsilon \gg 0$ such that $(z+\epsilon) \in
\mathbb{D} \cap B_{\delta}(z)$ and, by weak Pareto, $(z+\epsilon) P z$. Further,
since $\mathbb{D}$ is open, $\epsilon$ can be chosen such that
$(z-\epsilon) \in \mathbb{D} \cap B_{\delta}(z)$, and
by weak Pareto, $z P (z-\epsilon)$. Hence, for every $\delta > 0$, there exists
a point in $B_{\delta}(z)$ which is in $U^z$ and another point which is not in $U^z$.
This shows that $z$ is a boundary point of $U^z$. \\

\noindent {\sc Step 3:} By the supporting hyperplane theorem,
there exists a hyperplane through $z$ supporting the set $U^z$, i.e.,
there exists a non-zero vector $\lambda \in \mathbb{R}^n \setminus \{0\}$ such that for all $x \in U^z$,
\begin{align*}
\sum_{i=1}^n\lambda_ix_i &\ge \sum_{i=1}^n\lambda_iz_i.
\end{align*}
Denote the intersection of this hyperplane with the set $\mathbb{D}$
as $H^z$. \\

\noindent {\sc Step 4:} We next show that $\lambda \in \mathbb{R}^n_+ \setminus \{0\}$.
Assume for contradiction
$\lambda_j < 0$ for some $j \in N$. Since $\mathbb{D}$ is open
there exists $\epsilon \gg 0$ such that $(z+\epsilon) \in \mathbb{D}$.
Moreover, we can choose $\epsilon$ such that
\begin{align*}
\sum_{i=1}^n\lambda_i \epsilon_i &< 0.
\end{align*}
By weak Pareto $(z+\epsilon) P z$. Hence, $(z+\epsilon) \in U^z$. Thus,
\begin{align*}
\sum_{i=1}^n\lambda_i(z_i+\epsilon_i) &\ge \sum_{i=1}^n\lambda_iz_i.
\end{align*}
This implies that
\begin{align*}
\sum_{i=1}^n\lambda_i \epsilon_i &\ge 0.
\end{align*}
This is a contradiction. Hence, $\lambda_i \ge 0$ for all $i \in N$. \\

\noindent {\sc Step 5:} Now, consider $x \in \mathbb{D}$ such that
\begin{align*}
\sum_{i=1}^n\lambda_ix_i &> \sum_{i=1}^n\lambda_iz_i.
\end{align*}
We will show that $x P z$. Assume for contradiction $z R x$. We consider two cases. \\

\noindent {\sc Case 1:} Suppose $z P x$. Since $\mathbb{D}$ is open, there exists
a point $z'$ in $B_{\delta}(z)$ for some $\delta \in \mathbb{R}_+$ such that
\begin{enumerate}
\item[a)] $z$ lies on the line segment joining $z'$ and $x$ and
\item[b)] $x$ and $z'$ lies on opposite sides of the hyperplane $H_z$, i.e.,
\begin{align*}
\sum_{i=1}^n\lambda_iz'_i &< \sum_{i=1}^n\lambda_iz_i.
\end{align*}
\end{enumerate}

By (b) and using Step 3, $z P z'$. By our assumption $z P x$. Hence, $x, z' \in \mathbb{D}
\setminus U^z$. By Step 1, $\mathbb{D} \setminus U^z$ is convex.
Since $z$ is in the convex hull of $x$ and $z'$, we get that $z P z$. This is a contradiction. \\

\noindent {\sc Case 2:} Suppose $z I x$. Since $\mathbb{D}$ is open,
there exists $x'=x-\epsilon$ for some $\epsilon \gg 0$ such that
\begin{align*}
\sum_{i=1}^n\lambda_ix'_i &> \sum_{i=1}^n\lambda_iz_i.
\end{align*}
By weak Pareto $x P x'$. Hence, $z P x'$. By Case 1, this is not possible.
This is a contradiction. \\

\noindent Hence, in both cases we reach a contradiction, and conclude that $x P z$. \\

\noindent {\sc Step 6:} Now, consider $x \in \mathbb{D}$ such that
\begin{align*}
\sum_{i=1}^n\lambda_ix_i &= \sum_{i=1}^n\lambda_iz_i.
\end{align*}
We will show that $x I z$. Suppose not. There are two cases to consider. \\

\noindent {\sc Case 1:} Assume for contradiction $x P z$. By continuity, the
set $\{y: y P z\}$ is open in $\mathbb{D}$. Since $\mathbb{D}$ is open
in $\mathbb{R}^n$, we get that $\{y: y P z\}$ is open in $\mathbb{R}^n$.
Hence, there exists $\delta \in \mathbb{R}_+$
such that for every point in $x' \in B_{\delta}(x)$ we have $x' P z$. Choose
$\epsilon \gg 0$ such that for $x''=x-\epsilon$ we have $x'' \in B_{\delta}(x)$.
Hence, $x'' P z$. By Step 4, $\lambda \in \mathbb{R}^n_+ \setminus \{0\}$.
Hence, we get
\begin{align*}
\sum_{i=1}^n\lambda_ix''_i &< \sum_{i=1}^n\lambda_iz_i.
\end{align*}
But this is a contradiction since $x'' P z$ implies $x'' \in U^z$, which in turn implies
that
\begin{align*}
\sum_{i=1}^n\lambda_ix''_i &\ge \sum_{i=1}^n\lambda_iz_i.
\end{align*}

\noindent {\sc Case 2:} Assume for contradiction $z P x$. By continuity, the
set $\{y: z P y\}$ is open in $\mathbb{D}$. Hence, there exists $\delta \in \mathbb{R}_+$
such that for every point in $x' \in B_{\delta}(x)$ we have $z P x'$. Choose
$\epsilon \gg 0$ such that for $x''=x+\epsilon$ we have $x'' \in B_{\delta}(x)$.
Hence, $z P x''$. By Step 4, $\lambda \in \mathbb{R}^n_+ \setminus \{0\}$.
Hence, we get
\begin{align*}
\sum_{i=1}^n\lambda_ix''_i &> \sum_{i=1}^n\lambda_iz_i.
\end{align*}
By Step 5, this implies that $x'' P z$. This is a contradiction.

This shows that for any $z$, there exists $\lambda \in \mathbb{R}^n_+ \setminus \{0\}$
such that for all $x \in \mathbb{D}$, we have
\begin{align*}
x R z &\Leftrightarrow \sum_{i=1}^n\lambda_ix_i \ge \sum_{i=1}^n\lambda_iz_i.
\end{align*}

In other words, $H^z$ contains all the points in $\mathbb{D}$ which
are indifferent to $z$ under $R$. Moreover, on one side of $H^z$ we have points
in $\mathbb{D}$ which are better than $z$ under $R$ and on the other side,
we have points which are worse than $z$ under $R$.

Finally, pick any two points $x$ and $y$ in $\mathbb{D}$. 
Since $\mathbb{D}$ is open and convex, we can connect
$x$ and $y$ by a series of intersecting open balls along the convex hull
of $x$ and $y$, with each of these open balls contained in $\mathbb{D}$.
By invariance, for any two points $x'$ and $y'$ in such an open
ball, $H^{x'}$ and $H^{y'}$ have to be parallel to each other.
Since such open balls intersect each other, the hyperplanes $H^x$ and $H^y$ are parallel to each other. This 
completes the proof.
\end{proof}

When $\mathbb{D}= \mathbb{R}^n$, this result is well known due to \cite{Blackwell54} (see also
recent proofs in the utility representation literature - \cite{despremont77},
\cite{Blackorby84}, \cite{Trockel92},
and \cite{despremont02}).

We are now ready to state our main result.
\begin{theorem}\label{theo:newroberts}
Suppose $f$ is a neutral social choice function and for every $i \in N$, $T_i$ is an 
$m$-dimensional open interval. 
The social choice function $f$ is implementable if and only if there exists weights
$\lambda \in \mathbb{R}^n_+ \setminus \{0\}$ such that for all $t \in \mathbb{T}^n$,
\begin{align*}
f(t) &\in \arg \max_{a \in A} \sum_{i \in N}\lambda_it^a_i.
\end{align*}
\end{theorem}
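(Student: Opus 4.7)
The plan is to assemble the preceding propositions into a short chain. The hypothesis that each $T_i$ is an $m$-dimensional open interval means $\mathbb{D}=\prod_{i\in N}(\alpha_i,\beta_i)$ is an open rectangle in $\mathbb{R}^n$, hence both open and convex --- precisely the hypothesis needed to invoke Proposition \ref{prop:reporder}.

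For the nontrivial (``only if'') direction, assume $f$ is implementable and neutral. Proposition \ref{prop:order} yields a well-defined ordering $R^f$ on $\mathbb{D}$; Proposition \ref{prop:axioms} shows that $R^f$ satisfies weak Pareto, invariance, and continuity; and Proposition \ref{prop:reporder} then produces weights $\lambda\in\mathbb{R}^n_+\setminus\{0\}$ such that for all $x,y\in\mathbb{D}$,
\begin{align*}
x R^f y &\Longleftrightarrow \sum_{i\in N}\lambda_i x_i \ge \sum_{i\in N}\lambda_i y_i.
\end{align*}
It remains only to translate this back into a statement about $f$. Fix $t\in\mathbb{T}^n$ and set $a=f(t)$. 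Lemma \ref{lem:new1} gives $a\in C^f(t)$. For each $b\in A\setminus\{a\}$, the definition of $R^f$ yields $t^a R^f t^b$: either $b\in C^f(t)$ and then $t^a I^f t^b$, or $b\notin C^f(t)$ and then $t^a P^f t^b$. The representation then gives $\sum_i\lambda_i t^a_i \ge \sum_i\lambda_i t^b_i$ for every $b\in A$, which is exactly $f(t)\in\arg\max_{a\in A}\sum_i\lambda_i t^a_i$.

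For the converse, any selection from $\arg\max_{a\in A}\sum_i\lambda_i t^a_i$ is implemented by weighted VCG-type payments: for agents with $\lambda_i>0$, set $p_i(t)=\lambda_i^{-1}\sum_{j\ne i}\lambda_j t_j^{f(t)}+h_i(t_{-i})$ for an arbitrary $h_i$; for agents with $\lambda_i=0$, any $p_i(t)=h_i(t_{-i})$ works since their reports do not affect $f$. A standard single-crossing/VCG argument then shows truthful reporting is a dominant strategy.

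The genuinely hard work has already been done in Propositions \ref{lem:new3}--\ref{prop:reporder}, so the main theorem is essentially a corollary and the remaining work is bookkeeping. The one subtlety worth flagging is aligning the two slightly different structural roles played by the domain: Proposition \ref{prop:order} uses the interval structure of each $T_i$ to permute column vectors freely (invoking neutrality and binary independence), while Proposition \ref{prop:reporder} uses openness and convexity of $\mathbb{D}$ to apply the supporting hyperplane argument. Both requirements are simultaneously met by the $m$-dimensional open interval assumption, so the two halves of the argument dovetail and the theorem follows.
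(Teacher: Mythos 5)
Your proposal is correct and follows essentially the same approach as the paper's: both note that the $m$-dimensional open interval hypothesis makes $\mathbb{D}$ an open rectangle (hence open and convex), chain Propositions \ref{prop:order}, \ref{prop:axioms}, and \ref{prop:reporder} to obtain the linear representation of $R^f$, use Lemma \ref{lem:new1} to conclude $t^{f(t)} R^f t^b$ for all $b$, and cite the standard weighted VCG payments for the converse direction.
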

\begin{proof}
Suppose $f$ is neutral and implementable.
Note that since for every $i \in N$, $T_i$ is an open interval domain, then
$\mathbb{D}$ must be convex and open in $\mathbb{R}^n$ - indeed,
$\mathbb{D}$ is an open rectangle in $\mathbb{R}^n$. 
Hence, by Proposition \ref{prop:order}, a neutral and implementable
SCF $f$ induces a social welfare ordering $R^f$ on $\mathbb{D}$.
By Proposition \ref{prop:axioms}, $R^f$ satisfies continuity, weak Pareto,
and invariance. By Proposition \ref{prop:reporder} (since $\mathbb{D}$
is open and convex), there
exists weights $\lambda \in \mathbb{R}^n_+ \setminus \{0\}$
such that for every $x,y \in \mathbb{D}$
we have
\begin{align*}
x R^f y &\Leftrightarrow \sum_{i \in N}\lambda_ix_i \ge \sum_{i \in N}\lambda_iy_i.
\end{align*}
Finally, by Lemma \ref{lem:new1} for all $t \in \mathbb{D}^m$, $f(t) \in C^f(t)$. Hence,
$t^{f(t)} R^f t^b$ for all $b \in A$ and for all $t \in \mathbb{D}^m$.

It is well known that if $f$ is a weighted welfare
maximizer with weights $\lambda \in \mathbb{R}^n_+ \setminus \{0\}$, then
the following payment function $p:\mathbb{T}^n \rightarrow \mathbb{R}^n$
makes the social choice function implementable. For all $i \in N$ with $\lambda_i=0$,
$p_i(t)=0$ for all $t \in \mathbb{T}^n$. For all $i \in N$ with $\lambda_i > 0$,
\begin{align*}
p_i(t) &= \frac{1}{\lambda_i}\big[ \sum_{j \ne i}\lambda_jt_j^{f(t)} \big] - h_i(t_{-i}) \qquad
\forall~t \in \mathbb{T}^n.
\end{align*}
where $h_i: \mathbb{T}_{-i} \rightarrow \mathbb{R}$~\footnote{
Since $T_i$ is
connected for all $i \in N$, revenue equivalence holds in this setting \cite{Chung07,Hyden09}.
Hence, these are the {\em only} payment functions which makes $f$ implementable.}.
This proves the theorem.
\end{proof}

\subsection{Discussions}

In this section, we make several observations relating to our results. \\

\noindent{\sc Affine Maximizer and Weighted Welfare Maximizer Domains.} A plausible
conjecture is that every domain where neutral and implementable social choice functions
are weighted welfare maximizers are also domains where implementable social choice functions
are affine maximizers. This conjecture is false. To see this, observe that the domain in Example \ref{ex:ex1}.
The domain in this example, $(0,1)^2$ is a 2-dimensional open interval domain. However we have already
seen that it admits implementable social choice functions that are non-affine-maximizers (of course, 
these social choice functions are not neutral). 
This observation emphasizes the fact that neutrality plays a critical role in our result.\\

\noindent {\sc Auction domains are not covered.} It is well known that in auction domains,
there are social choice functions other than affine maximizers which are implementable
\cite{Lavi03}. These social choice functions are also neutral. Hence, in auction
domains, there are neutral social choice functions which are implementable, but not
weighted welfare maximizers. It can be reconciled with our result in several ways.
First, auction domains
are restricted domains which are not necessarily open (or even full dimensional). For example, consider the sale of
two objects to two buyers. The set of allocations can be $\{a,b,c,d\}$, where
$a$ denotes buyer 1 gets both the objects, $b$ denotes buyer 2 gets
both the objects, $c$ denotes buyer 1 gets object 1
and buyer 2 gets object 2, and $d$ denotes buyer 1 gets object 2 and buyer 2 gets
object 1. Note here that in every utility vector $t^a$ for allocation $a$ buyer 2 will have
zero valuation. Similarly, in every utility vector $t^b$ for allocation $b$ buyer 1 will have
zero valuation. Hence, this domain is not open.

Second,
our open interval domain assumption
is not usually satisfied in auction domains. This is because, agents usually have a partial
order on the set of alternatives (see \cite{Bikh06}). We do not allow any such partial order
in our model. Finally, neutrality is an unacceptably restrictive assumption in auction domains. 

However, as we have noted in Section \ref{subsection:neutrality}, there are settings where our domain
and neutrality assumptions are plausible. \\

\noindent {\sc No ordering without neutrality.} If we drop neutrality and replace it with
non-imposition, then Roberts' theorem says that
affine maximizers (as in Theorem \ref{theo:roberts}) are the only implementable social
choice functions. But affine maximizers do not necessarily induce the ordering we discussed.
This is because of the $\kappa(\cdot)$ terms in the affine maximizers. For example,
consider a type profile $t=(t^a=x,t^b=y,t^{-ab})$. Suppose $a \in C^f(t)$ but $b \notin C^f(t)$.
Here, the $\kappa(a)$ term may be higher than $\kappa(b)$ such that when we permute
the columns of $a$ and $b$ and get the new type profile $s=(s^a=y,s^b=x,t^{-ab})$, we
still have $a \in C^f(s)$ and $b \notin C^f(s)$. Thus, our social welfare ordering is
not induced here. \\

\noindent {\sc Anonymity gives efficiency.} Consider the following additional condition on every social choice function.

\begin{defn}
A social choice function $f$ is {\bf anonymous} if for every $t \in \mathbb{T}^n$ and
every permutation $\sigma$ on the row vectors (agents) of $t$, we have $f(\sigma(t))=f(t)$.
\end{defn}

\begin{defn}
An ordering $R$ on $\mathbb{D}$ satisfies {\bf anonymity} if for every $x,y \in \mathbb{D}$
and every permutation $\sigma$ on agents we have $x I y$ if $x=\sigma(y)$.
\end{defn}

\begin{lemma}\label{lem:anon}
Suppose $f$ is implementable and anonymous. Then, $R^f$ satisfies anonymity.
\end{lemma}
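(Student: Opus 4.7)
The plan is to write $\sigma$ as a product of transpositions $\sigma = \tau_k \circ \cdots \circ \tau_1$ and, setting $x_0 = y$ and $x_l = \tau_l(x_{l-1})$, obtain $x\, I^f\, y$ by proving $x_{l-1}\, I^f\, x_l$ at each step and invoking transitivity of $R^f$ (Proposition \ref{prop:order}). Hence it suffices to prove $x\, I^f\, \tau(x)$ for any single transposition $\tau$ on $N$ and any $x \in \mathbb{D}$.

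Fix such a $\tau$ swapping coordinates $i$ and $j$, and let $y = \tau(x)$; note $y \in \mathbb{D}$ since $\mathbb{D}$ is symmetric across coordinates (a prerequisite for anonymity to even be meaningful). I would pick two alternatives $a, b \in A$ and build a type profile $t$ with $t^a = x$, $t^b = y$, and, for every $c \in A \setminus \{a, b\}$, set $t^c = v_c \cdot (1, \ldots, 1)$, where the scalar $v_c$ lies in the open interval defining $\mathbb{D}$ and is strictly below every coordinate of $x$; such a $v_c$ exists by openness of $\mathbb{D}$. This profile has two decisive features: (i) each filler column $t^c$ satisfies $t^c \ll t^a = x$, so weak Pareto (Proposition \ref{prop:axioms}) together with the well-definedness of $P^f$ forces $c \notin C^f(t)$, giving $C^f(t) \subseteq \{a, b\}$; and (ii) each $t^c$ with $c \notin \{a, b\}$ has equal coordinates and so is fixed by $\tau$, while $\tau(t^a) = y = t^b$ and $\tau(t^b) = x = t^a$ (using $\tau^2 = \mathrm{id}$), so the row-permuted profile $\tau(t)$ coincides with the column-swapped profile $\rho(t)$, where $\rho = (a\, b)$ on $A$. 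Anonymity of $f$ yields $C^f(\tau(t)) = C^f(t)$, while neutrality of $f$ yields $C^f(\rho(t)) = \rho(C^f(t))$; equating these gives $a \in C^f(t) \Leftrightarrow b \in C^f(t)$. Since $C^f(t) \neq \emptyset$ by Lemma \ref{lem:new1} and $C^f(t) \subseteq \{a, b\}$, both $a$ and $b$ lie in $C^f(t)$, which is exactly the definition of $x\, I^f\, y$.

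The main technical obstacle is the identity $C^f(\tau(t)) = C^f(t)$, whose proof requires unwinding the quantifier over the perturbation $\epsilon \gg 0$ in the definition of $C^f$: one observes that replacing column $a$ of $\tau(t)$ by $(\tau(t))^a + \epsilon$ is the same as first replacing column $a$ of $t$ by $t^a + \tau^{-1}(\epsilon)$ and then applying $\tau$ to the rows, and $\tau^{-1}(\epsilon) \gg 0$ iff $\epsilon \gg 0$; anonymity of $f$ then allows the outer $\tau$ to be stripped off. Everything else is a matter of arranging the filler columns $t^c$ to be simultaneously dominated by $x$ and fixed by $\tau$, which the openness and coordinate-symmetry of $\mathbb{D}$ make straightforward.
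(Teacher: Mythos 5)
Your proof is correct, but it takes a genuinely different route from the paper's. You decompose $\sigma$ into transpositions and, for each transposition $\tau$ on agents, you construct a profile in which the row swap $\tau$ and the column swap $\rho=(a\,b)$ produce the identical matrix; anonymity then gives $C^f(\tau(t))=C^f(t)$, neutrality gives $C^f(\rho(t))=\rho(C^f(t))$, and equating these (together with the dominated filler columns forcing $C^f(t)\subseteq\{a,b\}$) yields $C^f(t)=\{a,b\}$, hence $x\,I^f\,\tau(x)$ directly. The paper instead argues by contradiction and does not break $\sigma$ into transpositions: assuming $x\,P^f\,\bar{\sigma}(x)$, it uses the single profile $t$ with $t^a=x$ and $t^b=\bar{\sigma}(x)$ for all $b\ne a$, so $C^f(t)=\{a\}$ and $f(t)=a$; applying $\sigma$ to the rows and invoking anonymity of $f$ plus Lemma \ref{lem:new1}, one reads off $\bar{\sigma}^{l}(x)\,R^f\,\bar{\sigma}^{l+1}(x)$ for each $l$, and chaining these by transitivity until $\sigma$ returns to the identity produces the absurdity $x\,P^f\,x$. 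The paper's route is shorter, and by working with $f$ itself (via Lemma \ref{lem:new1}) rather than with $C^f$ it sidesteps the quantifier-unwinding you correctly flag as the technical obstacle to proving $C^f(\tau(t))=C^f(t)$; it also never invokes $f$'s neutrality inside the proof body (only through the facts about $R^f$ from Propositions \ref{prop:order} and \ref{prop:axioms}), whereas your argument leans on neutrality explicitly. What your approach buys is a direct, non-contradiction construction that makes the interplay of the two symmetries — row permutations versus column permutations — very transparent, at the cost of the extra invariance lemma for $C^f$ and the explicit appeal to neutrality.
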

\begin{proof}
Let $\sigma$ be a permutation of the set of agents.
For any vector $x \in \mathbb{D}$, we write $\bar{\sigma}(x)$ to denote
the permutation of vector $x$ induced by the permutation $\sigma$ on
sets of agents.
Consider $x,y \in \mathbb{D}$ such that $y=\bar{\sigma}(x)$. 
Assume for contradiction $x P^f y$. Consider a type profile
$t$ such that $t^a=x$ and $t^b=y$ for all $b \ne a$. Hence, $C^f(t)=\{a\}=f(t)$.
Let $s$ be the type profile such that $s^c=\bar{\sigma}(t^c)$ for all $c \in A$.
Since $f$ is anonymous $f(s)=a$. Hence, $y R^f \bar{\sigma}(y)$, which futher
implies that $x P^f \bar{\sigma}(\bar{\sigma}(x))$.
Repeating this argument again, we will get $\bar{\sigma}(y) R^f \bar{\sigma}(\bar{\sigma}(y))$.
Hence, $x P^f \bar{\sigma}(\bar{\sigma}(\bar{\sigma}(x)))$.
Clearly, after repeating this procedure some finite number of times, 
we will be able to conclude $x P^f x$, which is a contradiction.
\end{proof}

It is straightforward to show using Theorem \ref{theo:newroberts} 
that every implementable, neutral,
and anonymous social choice function in an open interval domain is the efficient social choice function.
Here, we show that this result holds for some other domains too.
The proof is an adaptation of an elegant proof
by \cite{Milnor54} (see also Theorem 4.4 in \cite{despremont02}). We give the proof in Appendix A.

\begin{theorem}
\label{theo:efficiency} Suppose $f$ is implementable, neutral, and anonymous. If
$\mathbb{T}^n=[0,H)^{m \times n}$, where $H \in \mathbb{R} \cup \{\infty\}$, then $f$ is 
the efficient social choice function.
\end{theorem}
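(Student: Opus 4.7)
The plan is to first restrict attention to the open interior $(0,H)^{m\times n}$, apply Theorem \ref{theo:newroberts} there to identify $f$ as a weighted welfare maximizer, use anonymity to collapse the weights to a common value, and then extend efficiency from the interior to the boundary via a PAD-based perturbation argument.

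First, I would consider the restriction $f'$ of $f$ to $(0,H)^{m\times n}$. Because the choice set $C^f(t)$ is determined by arbitrarily small perturbations of column $t^a$, one has $C^{f'}(t)=C^f(t)$ for every $t$ in the interior; consequently $f'$ is implementable, neutral, and anonymous on an $m$-dimensional open interval domain. Theorem \ref{theo:newroberts} then yields weights $\lambda\in\mathbb{R}^n_+\setminus\{0\}$ with $f'(t)\in\arg\max_a\sum_i\lambda_i t_i^a$. By Lemma \ref{lem:anon}, the induced ordering $R^{f'}$ is anonymous, so combining with the linear representation from Proposition \ref{prop:reporder}, the equality $\sum_i \lambda_{\sigma^{-1}(i)}y_i=\sum_i\lambda_i y_i$ must hold for every permutation $\sigma$ on an open set of $y$'s. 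This forces $\lambda_i=\lambda_j$ for all $i,j$, and hence $f$ is the efficient social choice function on $(0,H)^{m\times n}$.

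Second, I would extend efficiency to any $t\in[0,H)^{m\times n}$. Let $a=f(t)$ and suppose for contradiction that $\delta:=\sum_i(t_i^b-t_i^a)>0$ for some $b\ne a$. Pick $0<\epsilon'<\epsilon$ with $n(\epsilon-\epsilon')<\delta$ and with $\epsilon$ small enough that the perturbation $s$, defined by $s^a=t^a+\epsilon\mathbf{1}$ and $s^c=t^c+\epsilon'\mathbf{1}$ for every $c\ne a$, lies in $(0,H)^{m\times n}$. Because $s^a-t^a\gg s^c-t^c$ for every $c\ne a$, PAD gives $f(s)=a$. But
\begin{align*}
\sum_i s_i^b-\sum_i s_i^a &= \delta - n(\epsilon-\epsilon')>0,
\end{align*}
contradicting the efficiency of $f$ already established on the interior. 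Therefore $a\in\arg\max_c\sum_i t_i^c$, as required.

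The main obstacle I anticipate is the bookkeeping in this interior-to-boundary extension: the perturbation must simultaneously (i) enter the open interior so the efficiency already proved applies, (ii) preserve the strict sum-inequality witnessing non-efficiency, and (iii) satisfy the strict componentwise ranking PAD demands so that $f(s)=f(t)$. The two-parameter perturbation above is designed precisely to balance these three constraints. A secondary subtlety, which the first paragraph handles, is that neutrality and the choice-set machinery restrict properly from $[0,H)^{m\times n}$ to its interior; this is immediate because choice sets depend only on arbitrarily small local perturbations.
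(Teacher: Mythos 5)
Your proof is correct, but it takes a genuinely different route from the paper's. The paper observes that $\mathbb{D}=[0,H)^n$ is open from above and a meet-semilattice, checks that Propositions \ref{prop:order} and \ref{prop:axioms} survive in that generality, and then argues directly with the ordering $R^f$ via an adaptation of Milnor's classical argument: for $x,y$ with equal coordinate sums, one repeatedly permutes (anonymity) and subtracts componentwise minima (invariance) to drive both vectors to the origin, forcing $x\,I^f\,y$; weak Pareto then delivers the strict case. That argument needs $0\in\mathbb{D}$ and, notably, never invokes continuity or the supporting-hyperplane machinery behind Theorem \ref{theo:newroberts}; indeed the paper explicitly remarks that the result ``is not a corollary to Theorem \ref{theo:newroberts}'' because the domain is not open. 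Your route shows that it essentially is one anyway: you restrict $f$ to the open interior $(0,H)^{m\times n}$, observe that $C^{f'}(t)=C^f(t)$ there (choice sets only probe small upward perturbations, and upward perturbations of interior profiles land in the interior regardless of which domain you use), so $f'$ inherits implementability, neutrality and anonymity; you then apply Theorem \ref{theo:newroberts} together with the anonymity of $R^{f'}$ and the linearity of the representation to force equal weights; and finally you push efficiency out to the boundary by a two-parameter PAD perturbation calibrated so that $s$ lands in the interior, $f(s)=f(t)$ survives, and the sum inequality witnessing non-efficiency is preserved. The paper's route is more elementary and self-contained (no continuity, no hyperplanes, and the Milnor argument is the point the authors want to showcase); yours is more modular, reuses the main theorem as a black box, and clarifies that closing one face of the domain costs nothing beyond a routine PAD perturbation. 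Both are sound.
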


Note here that the domain in Theorem \ref{theo:efficiency} always includes
the origin (this is crucial for the proof) and is not open from ``left". Hence, this result is not a corollary to
Theorem \ref{theo:newroberts}.

\section{Roberts' Affine-Maximizer Theorem}
\label{sec:gen}

In this section,
we show how the general version of Roberts' theorem using version of 
Roberts' theorem with neutrality, which we have proved earlier.
We assume throughout that the domain is {\bf unrestricted}, i.e., $\mathbb{T}^n=\mathbb{R}^{m \times n}$.
Although our proof of the general Roberts' theorem uses elements
developed in earlier proofs, we believe nonetheless that it offers some
new insights into the result. The main idea behind our proof is to transform an 
arbitrary implementable
social choice function to a neutral implementable social choice function. Then, we can readily use Roberts'
theorem with neutrality on the new social choice function to get the Roberts' theorem.

Consider a mapping $\delta: A \rightarrow \mathbb{R}$.
Denote $1_{\delta(a)}$ as the vector of $\delta(a)$ s in $\mathbb{R}^n$.
Let $1_{\delta} \equiv (1_{\delta(a)},1_{\delta(b)},\ldots)$ be
the profile of $m$ such vectors, each corresponding to an allocation in $A$.
For any social choice function $f$, define $f^{\delta}$ as follows. For every $t \in \mathbb{T}^n$,
let $(t+ 1_{\delta}) \in \mathbb{T}^n$ be such that $(t+1_{\delta})^a=t^a + 1_{\delta(a)}$ for all $a \in A$.
For every $t \in \mathbb{T}^n$, let
\begin{align*}
f^{\delta}(t) &= f(t+1_{\delta}).
\end{align*}
Since $\delta(a)$ is finite for all $a \in A$, the social choice function $f^{\delta}$
is well-defined.

\begin{prop}[Implementability Invariance]\label{prop:unr1}
For every $\delta: A \rightarrow \mathbb{R}$, if $f$ is implementable, then $f^{\delta}$ is implementable.
\end{prop}
\begin{proof}
Since $f$ is implementable, there exists a payment function $p$ which implements it.
We define another payment function $p^{\delta}$ as follows. For every $t \in \mathbb{T}^n$
and every $i \in N$,
\begin{align*}
p_i^{\delta}(t) &= p_i(t+1_{\delta}) + \delta(f^{\delta}(t)).
\end{align*}
We will show that $p^{\delta}$ implements $f^{\delta}$. To see this, fix an agent $i \in N$
and $t_{-i} \in \mathbb{T}_{-i}$. Let $s=(s_i,t_{-i})$ and note the following.
\begin{align*}
t_i^{f^{\delta}(t)} + p^{\delta}_i(t) &= t^{f(t+1_{\delta})}_i + p_i(t+1_{\delta}) + \delta(f^{\delta}(t)) \\
&= t^{f(t+1_{\delta})}_i + p_i(t+1_{\delta}) + \delta(f(t+1_{\delta})) \\
&= (t+1_{\delta})^{f(t+1_{\delta})}_i + p_i(t+1_{\delta}) \\
&\ge (t+1_{\delta})^{f(s+1_{\delta})}_i + p_i(s+1_{\delta}) \\
&= (t+1_{\delta})^{f^{\delta}(s)}_i + p_i(s+1_{\delta}) \\
&= t_i^{f^{\delta}(s)} + \delta(f^{\delta}(s)) + p_i(s+1_{\delta}) \\
&= t_i^{f^{\delta}(s)} + p^{\delta}(s), 
\end{align*}
where the inequality followed from the implementability of $f$ by $p$. Hence, $p^{\delta}$
implements $f^{\delta}$.
\end{proof}

Our next step is to find a mapping $\delta:A \rightarrow \mathbb{R}$ such that
$f^{\delta}$ is neutral. We will need
the following property of choice sets.

\begin{lemma}\label{lem:ch1}
Suppose $f$ is implementable and satisfies non-imposition. Let $t$ be
a type profile such that $C^f(t)=\{a\}$ for some $a \in A$. Then,
for some $\epsilon \in \mathbb{R}^n_{++}$, $a \in C^f(s)$, where
$s^a=t^a-\epsilon$ and $s^b=t^b$ for all $b \ne a$.
\end{lemma}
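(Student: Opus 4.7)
My plan is to prove the strengthened claim that for some $\epsilon \in \mathbb{R}^n_{++}$, $f(s) = a$, where $s = (t^a - \epsilon, t^{-a})$. Once this is shown, a single application of PAD at $s$ (moving only the $a$-coordinate upward by an arbitrary $\epsilon' \gg 0$) gives $a \in C^f(s)$ immediately.

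I would argue by contradiction. Assume no such $\epsilon$ exists. Take the sequence $\epsilon_n = (1/n,\ldots,1/n) \in \mathbb{R}^n_{++}$ and profiles $s_n = (t^a - \epsilon_n, t^{-a})$, which lie in $\mathbb{T}^n$ for all sufficiently large $n$ by openness of $\mathbb{D}$. By assumption, $f(s_n) = c_n \ne a$ for every large $n$, and since $A$ is finite we may pass to a subsequence along which $c_n = c$ for some fixed $c \ne a$. The plan is then to propagate $f(s_n) = c$ to the profile $t$ itself via two applications of PAD, thereby showing $c \in C^f(t)$ --- a contradiction with the hypothesis $C^f(t) = \{a\}$.

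The two PAD steps go as follows. First, starting from $f(s_n) = c$, PAD applied to the perturbation that raises only the $c$-coordinate by $\delta \gg 0$ yields $f(t^a - \epsilon_n, t^c + \delta, t^{-ac}) = c$ (the strict inequality $\delta \gg 0$ dominates the zero change in every other coordinate). Second, applying PAD at this new profile by raising the $a$-coordinate back up by $\epsilon_n$ while raising the $c$-coordinate by an additional $2\epsilon_n$, one obtains $f(t^a, t^c + \delta + 2\epsilon_n, t^{-ac}) = c$; the PAD condition holds because $2\epsilon_n \gg \epsilon_n$ and $2\epsilon_n \gg 0$. Finally, given any $\gamma \gg 0$ with $(t^a, t^c + \gamma, t^{-ac}) \in \mathbb{T}^n$, choose $n$ large enough that $\gamma \gg 2\epsilon_n$ and set $\delta = \gamma - 2\epsilon_n \gg 0$; this produces $f(t^a, t^c + \gamma, t^{-ac}) = c$ for every admissible $\gamma$, hence $c \in C^f(t)$.

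The main obstacle is bookkeeping the two PAD applications so that the required strict componentwise inequalities hold: one must balance the loss $\epsilon_n$ in the $a$-coordinate against a strictly larger gain $2\epsilon_n$ in the $c$-coordinate. Openness of $\mathbb{D}$ is invoked repeatedly to ensure every intermediate profile stays in the domain, and finiteness of $A$ is what lets us stabilize the alternative $c_n$ along a subsequence.
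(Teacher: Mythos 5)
Your argument is correct, and it is a genuinely different route from the paper's. The paper's proof is constructive and ``goes up then down'': for each $b \ne a$ it uses $b \notin C^f(t)$ together with Proposition~\ref{lem:new3} to find a perturbation $\epsilon_b$ so that raising column $b$ by $\epsilon_b$ preserves $C^f = \{a\}$; iterating over all $b \ne a$ builds a profile $v$ with every non-$a$ column strictly raised and $C^f(v)=\{a\}$, after which a single downward PAD application from $v$ to $s$ (with $\epsilon \ll \min_b \epsilon_b$) yields the result. Your proof instead argues by contradiction: supposing $f(t^a-\epsilon, t^{-a}) \ne a$ for every small $\epsilon \gg 0$, you exploit finiteness of $A$ to extract a fixed $c \ne a$ that wins at a subsequence $\epsilon_n \to 0$, then propagate $c$ upward to $t$ by two PAD moves (first raising only column $c$ by $\delta$, then raising column $a$ by $\epsilon_n$ while raising column $c$ by $2\epsilon_n$), and by taking $n$ large you recover $f(t^a, t^c+\gamma, t^{-ac})=c$ for every admissible $\gamma \gg 0$, i.e., $c \in C^f(t)$, contradicting $C^f(t)=\{a\}$. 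The bookkeeping of the PAD inequalities is right throughout ($\delta \gg 0$, $2\epsilon_n \gg \epsilon_n \gg 0$, and $\gamma - 2\epsilon_n \gg 0$ for $n$ large). What your approach buys is economy: it avoids Proposition~\ref{lem:new3} and the $(m-1)$-step induction entirely, at the cost of a non-constructive subsequence argument; the paper's proof is slightly longer but produces an explicit $\epsilon$ in terms of the $\epsilon_b$'s. Neither proof actually uses the non-imposition hypothesis directly; it is present because it is a standing assumption in Section~\ref{sec:gen}.
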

\begin{proof}
Since $C^f(t)=\{a\}$, we have $f(t)=a$ (by Lemma \ref{lem:new1}).
Choose some $b \ne a$. Since $b \notin C^f(t)$, there exists
$\epsilon_b \in \mathbb{R}^n_{++}$ such that $b \notin C^f(u)$,
where $u^b=t^b+\epsilon_b$ and $u^c=t^c$ for all $c \ne b$.
Indeed, by Proposition \ref{lem:new3}, $C^f(u)=\{a\}$.
Now, consider the type profile $v$ such that $v^c=t^c+\epsilon_c$
for all $c \ne a$ and $v^a=t^a$. We will show that $C^f(v)=\{a\}$.

To show this, we go from $t$ to $v$ in $(m-1)$ steps. In the first step,
we choose an arbitrary allocation $b \ne a$, and consider a type profile
$x$, where $x^b=v^b$ and $x^c=t^c$ for all $c \ne b$. By definition of
$\epsilon_b$, we have $C^f(x)=\{a\}$. Next, we choose another allocation
$c \notin \{a,b\}$, and consider a type profile $y$ such that $y^d=x^d$ if $d \ne c$
and $y^d=v^d$ otherwise. We first show that $c \notin C^f(y)$. Assume for contradiction,
$c \in C^f(y)$, then by PAD, $c \in C^f(x)$. This is a contradiction. Hence, $c \notin C^f(y)$,
and by Proposition \ref{lem:new3}, $C^f(y)=\{a\}$. We now repeat this procedure
by choosing $d \notin \{a,b,c\}$ and considering a type profile $z$ where utility vector
of $z$ is increased to $v^d$ and every other utility vector remains at $y$. After
a finite steps, we will reach the type profile $v$ with $C^f(v)=\{a\}$.

Now, choose $\epsilon=\frac{1}{2} \min_{b \ne a}\epsilon_b$. Consider a type profile
$s$ such that $s^b=t^b$ for all $b \ne a$ and $s^a=t^a-\epsilon$. By PAD (from
$v$ to $s$), $a \in C^f(s)$.
\end{proof}

Now, we define a set which can also be found in Roberts' original proof (see also
\cite{Lavi08}). For every $a, b \in A$
and every social choice function $f$ define the $P$-set as
\begin{align*}
P^f(a,b) &= \{\alpha \in \mathbb{R}^n: \exists t \in \mathbb{T}^n~\textrm{such that}~
a \in C^f(t), t^a-t^b=\alpha\}.
\end{align*}

\cite{Roberts79} and \cite{Lavi08} define the $P$-set slightly differently. They
let $P^f(a,b)=\{\alpha \in \mathbb{R}^n: \exists t \in \mathbb{T}^n~\textrm{such that}~
f(t)=a, t^a-t^b=\alpha\}$. Our notion of $P$-set is the interior of the $P$-set
they define.

The $P$-sets are non-empty if the social choice function
satisfies non-imposition. To see this, choose $a,b \in A$ and a social choice function $f$. 
By non-imposition, there
must exist a $t \in \mathbb{T}^n$ such that $f(t)=a$, which implies
that $a \in C^f(t)$ and $(t^a-t^b) \in P^f(a,b)$.

We want to characterize a neutral social choice function by the properties of its $P$-sets.
Here is a necessary and sufficient condition.
\begin{prop}[Neutrality]\label{prop:neu1}
Suppose $f$ is an implementable social choice function.
The social choice function $f$ is neutral if and only if
$P^f(a,b)=P^f(c,d)$ for all $a,b,c,d \in A$.
\end{prop}
\begin{proof}
Suppose $f$ is implementable and neutral.
Let $\alpha \in P^f(a,b)$. So, for some type profile $t$, we have $a \in C^f(t)$
and $t^a-t^b=\alpha$. Now, permuting $a,b$ respectively with $c,d$, we get
a new type profile $s$ with $s^c=t^a, s^d=t^b,s^a=t^c,s^b=t^d$. By neutrality,
$c \in C^f(s)$ and $s^c-s^d=t^a-t^b=\alpha$. So, $\alpha \in P^f(c,d)$.
Exchanging the role of $(a,b)$ and $(c,d)$ in this argument, we get that
$\alpha \in P^f(c,d)$ implies $\alpha \in P^f(a,b)$. Thus, $P^f(a,b)=P^f(c,d)$.

Now, suppose that $f$ is implementable and $P^f(a,b)=P^f(c,d)$ for all
$a,b,c,d \in A$. Consider a permutation $\rho$ of $A$. Without loss
of generality, assume that $\rho$ is a transposition, i.e., for some $a,b \in A$
we have $\rho(a)=b, \rho(b)=a$, and $\rho(c)=c$ for all $c \notin \{a,b\}$.
Consider a type profile $t \in \mathbb{T}^n$ and let $s$ be the type
profile induced by permutation $\rho$ on $t$, i.e., $s^a=t^b,s^b=t^a$,
and $s^{-ab}=t^{-ab}$. We show $f$ is neutral in several steps. \\

\noindent {\sc Step 1:} Suppose $a \notin C^f(t)$. We show that
$b \notin C^f(s)$. Assume for contradiction $b \in C^f(s)$. 
Let $c \in C^f(t)$. Such a $c$ exists since $C^f(t)$ is non-empty. Note that
$c \ne a$. There are two cases to consider. \\

\noindent {\sc Case 1:} Suppose $c=b$. Because, $b \in C^f(s)$,
we get that $(t^a-t^c) \in P^f(b,a)=P^f(a,c)$. \\

\noindent {\sc Case 2:} Suppose $c \notin \{a,b\}$. Again, because $b \in C^f(s)$,
we get that $(t^a-t^c) \in P^f(b,c)=P^f(a,c)$. \\ 

So, we get $(t^a-t^c) \in P^f(a,c)$ in both the cases.
Then for some $\epsilon \in \mathbb{R}^n$ and some type profile $v=(v^a=t^a+\epsilon,
v^c=t^c+\epsilon,v^{-ac})$, we have $a \in C^f(v)$. Consider the type profile
$u$ such that $u^a=t^a,u^c=t^c$, and $u^d=v^d-\epsilon$ for all $d \notin \{a,c\}$.
By PAD, $a \in C^f(u)$. But, in both $t$ and $u$, the utility vectors corresponding to
$a$ and $c$ are respectively $t^a$ and $t^c$. Since $a \notin C^f(t)$ and $c \in C^f(t)$,
by Proposition \ref{lem:new3}, $a \notin C^f(u)$. This is a contradiction. \\

\noindent {\sc Step 2:} Suppose $a \in C^f(t)$. We show that
$b \in C^f(s)$. Assume for contradiction $b \notin C^f(s)$. By
Step 1, $a \notin C^f(t)$. This is a contradiction. \\

\noindent {\sc Step 3:} Suppose $c \in C^f(t)$, where $c \notin \{a,b\}$. 
We show that $c \in C^f(s)$. Since $c \in C^f(t)$, we have
$(t^c-t^a),(t^c-t^b) \in P^f(c,b)$. Assume for contradiction $c \notin C^f(s)$.
Then, for some $d \ne c$, we have $d \in C^f(s)$.
There are two cases to consider. \\

\noindent {\sc Case 1:} Suppose $d \notin \{a,b,c\}$. 
In that case, by Proposition \ref{lem:new3} (applied to $s$ and $t$), $c \notin C^f(t)$.
This is a contradiction. \\

\noindent {\sc Case 2:} Suppose $d \in \{a,b\}$. Without
loss of generality, let $d=a$. So, $a \in C^f(s)$ but $c \notin C^f(s)$.
Now, since $(t^c-t^b) \in P^f(c,b) = P^f(c,a)$, there exists
a type profile $u=(u^a=t^b+\epsilon,u^c=t^c+\epsilon,u^{-ac})$
such that $c \in C^f(u)$. By PAD, $c \in C^f(v)$, where
$v^a=t^b, v^c=t^c$, and $v^d=u^d - \epsilon$ for all $d \notin \{a,c\}$.
By Proposition \ref{lem:new3}, we get that if $c \notin C^f(s)$, then
$a \notin C^f(s)$. This is a contradiction. \\

\noindent {\sc Step 4:} Suppose $c \notin C^f(t)$. Assume for contradiction
$c \in C^f(s)$. Exchanging the role of $s$ and $t$ in Step 3, we get that
$c \in C^f(t)$. This is a contradiction. \\

\noindent Combining all the steps, we get that $C^f(s)=\{\rho(c): c \in C^f(t)\}$,
i.e., $f$ is neutral.
\end{proof}

We begin by noting two properties of the $P$-sets. Identical properties have been established in
\cite{Lavi08,Vohra08} for their version of $P$-sets. We give proofs which are also
more direct.

\begin{lemma}\label{lem:psetlavi1}
Suppose $f$ is implementable and satisfies non-imposition. The following statements
are true for every $a,b,c \in A$.
\begin{enumerate}
\item If $(\beta - \epsilon) \in P^f(a,b)$ for some $\beta \in \mathbb{R}^n$ and some $\epsilon \in \mathbb{R}^n_{++}$,
then $-\beta \notin P^f(b,a)$.

\item If $\beta \in P^f(a,b)$ and $\alpha \in P^f(b,c)$, then $(\beta+\alpha) \in P^f(a,c)$.

\end{enumerate}
\end{lemma}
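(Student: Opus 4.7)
The plan is to establish both statements by combining the two witnesses for the $P^f$-sets via constant-vector shifts and PAD applications, then invoking binary independence (Proposition \ref{lem:new3}) on profiles that share the appropriate utility columns.

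For Part 1, I would argue by contradiction. Suppose $(\beta - \epsilon) \in P^f(a,b)$ and $-\beta \in P^f(b,a)$, with witnesses $t$ (giving $a \in C^f(t)$, $t^a - t^b = \beta - \epsilon$) and $s$ (giving $b \in C^f(s)$, $s^a - s^b = \beta$). Shifting $s$ by a constant vector in $\mathbb{R}^n$ preserves choice sets and can be chosen so that $s^b = t^b$, which forces $s^a = t^a + \epsilon$. I would then build two profiles sharing the $a$- and $b$-columns,
\[
u = \bigl(u^a = t^a + \epsilon,\; u^b = t^b + \tfrac{\epsilon}{2},\; u^d \text{ very negative}\bigr), \quad u' = \bigl(u'^a = s^a,\; u'^b = s^b + \tfrac{\epsilon}{2},\; u'^d \text{ very negative}\bigr),
\]
and apply PAD from $t$ with the strict gap $\epsilon \gg \tfrac{\epsilon}{2}$ to obtain $f(u) = a$; PAD from $t$ also yields $f(u^b + \delta, u^{-b}) = a$ for any $\delta$ with $\delta \ll \tfrac{\epsilon}{2}$, hence $b \notin C^f(u)$. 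PAD from $s$, using $b \in C^f(s)$, gives $b \in C^f(u')$. Proposition \ref{lem:new3}(b) now forces $b \notin C^f(u')$, the desired contradiction.

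For Part 2, the strategy is parallel: take witnesses $t$ and $s$, shift $s$ so that $s^b = t^b$ (giving $s^c = t^b - \alpha$), and build a profile $u$ with $u^a = t^a$, $u^c = s^c$, and $u^b$ and $u^d$ chosen to certify $a \in C^f(u)$ via PAD chains from $t$ and $s$. Binary independence on the $(a,c)$-pair would then transfer choice-set membership to the profile with the required $a$- and $c$-columns. The main obstacle is that PAD demands strict componentwise inequalities, so direct constructions produce profiles whose $(a,c)$-difference equals $\beta + \alpha + O(\eta)$ for some slack $\eta \gg 0$, while $P^f(a,c)$ is only upward-closed in the natural order, not open enough to absorb this slack. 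Overcoming it requires first modifying the witness $t$ via a PAD-preserving lift (in the spirit of Lemma \ref{lem:ch1}) so that its $c$-column dominates $s^c$ componentwise, after which a single PAD step from the modified $t$ produces $u$ with the exact difference $u^a - u^c = \beta + \alpha$.
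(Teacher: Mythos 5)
Your Part 1 argument is sound. The unjustified assertion that shifting every column of $s$ by a constant vector $z \in \mathbb{R}^n$ preserves $C^f$ does in fact follow from PAD: if $a \in C^f(s)$, then $f(s^a + \eta, s^{-a}) = a$ for all small $\eta \gg 0$, and PAD from $(s^a + \eta, s^{-a})$ to the shifted profile with $a$-column $s^a + z + \eta'$ (taking $\eta' \gg \eta$) keeps $a$ in the choice set, since the $a$-column moves by $z + (\eta' - \eta) \gg z$ while every other column moves by exactly $z$. With the shift in hand, your construction of $u$ and $u'$, the PAD chains yielding $a \in C^f(u)$, $b \notin C^f(u)$, and $b \in C^f(u')$, and the final contradiction via Proposition \ref{lem:new3}(b) all go through. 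The paper's own proof instead perturbs $t$'s $a$-column downward and runs a case analysis over whether $b$ lies in the intermediate choice set; the two constructions differ, but both route the contradiction through binary independence, and your version is, if anything, a bit more direct.

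Part 2, however, is not a proof as written. You correctly diagnose that a direct PAD chain from the two witnesses only certifies $(\beta + \alpha + \eta) \in P^f(a,c)$ for some slack $\eta \gg 0$, and $P$-sets are upward- but not downward-closed. But the fix you propose --- a ``PAD-preserving lift'' of $t$'s $c$-column so that it dominates $s^c$ --- does not exist in general. Raising $t^c$ while keeping $a \in C^f(t)$ is not something PAD licenses: push $t^c$ high enough and $c$ displaces $a$ from the choice set, and you have no control over where that threshold sits. Lemma \ref{lem:ch1}, which you invoke ``in spirit,'' only permits \emph{decreasing} the column of the \emph{uniquely chosen} alternative, which is the wrong direction and the wrong column for your purpose. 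The paper's argument is of a fundamentally different character: set $s^a = t^a$, $s^b = t^b$, $s^c = t^b - \alpha$ with all remaining columns very low, and suppose for contradiction that $a \notin C^f(s)$. Binary independence then forces $b \notin C^f(s)$, hence $C^f(s) = \{c\}$; Lemma \ref{lem:ch1} applied to $s$ yields $(-\alpha - \epsilon) \in P^f(c,b)$ for some $\epsilon \gg 0$, and \emph{Part 1 of the lemma} then gives $\alpha \notin P^f(b,c)$, the desired contradiction. The missing idea in your proposal is that Part 2 is deduced from Part 1, not from a cleverer PAD chain.
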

\begin{proof}
Fix $a,b,c \in A$. \\

\noindent {\sc Proof of (1):} Suppose $(\beta-\epsilon) \in P^f(a,b)$ for some $\beta \in \mathbb{R}^n$
and some $\epsilon \in \mathbb{R}^n_{++}$. Assume for contradiction that $-\beta \in P^f(b,a)$. So, there
exists some type profile $t$ such that $b \in C^f(t)$ and $t^a-t^b=\beta$.
Consider the type profile $s$ such that $s^a=t^a-\epsilon$ and $s^c=t^c$ for all $c \ne a$.
Note that $(s^a-s^b)=(\beta-\epsilon)$.
We first show that $a \in C^f(s)$. Since $(\beta-\epsilon) \in P^f(a,b)$, there is
some profile $u=(u^a=s^a+\alpha,u^b=s^b+\alpha,u^{-ab})$, where
$\alpha \in \mathbb{R}^n$, such that $a \in C^f(u)$.
By PAD, there is a profile $v=(v^a=s^a,v^b=s^b,v^{-ab})$ such that $a \in C^f(v)$.
We consider two cases. \\

\noindent {\sc Case 1:} Suppose $b \notin C^f(v)$. Then, by Proposition \ref{lem:new3}, $b \notin C^f(s)$.
By PAD, $b \notin C^f(t)$, which is a contradiction. \\

\noindent {\sc Case 1:} Suppose $b \in C^f(v)$. Then, by Proposition \ref{lem:new3}, $a \in C^f(s)$ if and only if
$b \in C^f(s)$. If $b \notin C^f(s)$, as in Case 1, we have a contradiction due to PAD. Hence, $a,b \in C^f(s)$.
Consider the type profile $x$ such that $x^a=t^a$, $x^b=t^b+\frac{\epsilon}{2}$, and $x^c=t^c$ for all $c \notin \{a,b\}$.
By PAD, $f(x)=a$. Hence, $b \notin C^f(t)$. This is a contradiction. \\

\noindent {\sc Proof of (2):} Suppose $\beta \in P^f(a,b)$ and $\alpha \in P^f(b,c)$.
Then, there must exist $t \in \mathbb{T}^n$ such that $a \in C^f(t)$ and $t^a-t^b=\beta$.
Now, consider a type profile $s$ such that $s^a=t^a$, $s^b=t^b$, $s^c=t^b-\alpha$, and
$s^d$ is sufficiently low for all $d \notin \{a,b,c\}$. We show that for all $d \notin \{a,b,c\}$,
we have $d \notin C^f(s)$. Assume for contradiction $d \in C^f(s)$. Then, by PAD, $a \notin C^f(t)$,
which is a contradiction. So, $C^f(s) \subseteq \{a,b,c\}$. 

We show that $a \in C^f(s)$. Assume for contradiction $a \notin C^f(s)$. Then, by Proposition \ref{lem:new3},
$b \notin C^f(s)$. This implies that $C^f(s)=\{c\}$ (by Lemma \ref{lem:new1}). 
By Lemma \ref{lem:ch1}, $(-\alpha - \epsilon) \in P^f(c,b)$.
By (1), $\alpha \notin P^f(b,c)$. This is a contradiction.

This implies that $a \in C^f(s)$, and hence, $(s^a-s^c) \in P^f(a,c)$. But
$s^a-s^c=t^a-t^b+\alpha=\beta+\alpha$ implies that $(\beta+\alpha) \in P^f(a,c)$.
\end{proof}

We are now ready to define the mapping that will make any social
choice function neutral.
Define the following mapping $\kappa:A \rightarrow \mathbb{R}$ as follows.
For all $a \in C^f(0)$~\footnote{Here, $0$ denotes the type profile, where
every agent's type is the $m$-dimensional zero vector.}, 
let $\kappa(a)=0$. For all $a \notin C^f(0)$, define
$\kappa(a)$ as follows. Denote a type vector $t$ as $1^b_{\epsilon}$,
where all utility (column)
vectors except one, say $t^b$, is zero vector and $t^b=1_{\epsilon}$ for
some $\epsilon \in \mathbb{R}$. For all $a \notin C^f(0)$,
\begin{align*}
\kappa(a) &= \{\epsilon \in \mathbb{R}_+: C^f(1^a_{\epsilon}) = C^f(0) \cup \{a\}\}.
\end{align*}

Our first claim is that for all $a \in A$, $\kappa(a) \in \mathbb{R}_+$ exists.

\begin{lemma}\label{lem:finitekappa}
Suppose $f$ is implementable and satisfies non-imposition.
Then, for all $a \in A$, $\kappa(a) \in \mathbb{R}_+$ and is unique.
Moreover, $\kappa(a) = \inf \{ \epsilon \in \mathbb{R}_+: a \in C^f(1^a_{\epsilon})\}$.
\end{lemma}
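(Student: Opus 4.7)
The plan is to handle the two branches of the definition of $\kappa(a)$ separately, leveraging PAD, binary independence (Proposition~\ref{lem:new3}), and the $P$-set machinery of Lemma~\ref{lem:psetlavi1}.

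When $a \in C^f(0)$, both claims are immediate: $\kappa(a)=0$ by direct assignment, and since $1^a_0$ coincides with the zero profile $0$, the element $0$ lies in $\{\epsilon \in \mathbb{R}_+ : a \in C^f(1^a_\epsilon)\}$, so its infimum over $\mathbb{R}_+$ is $0=\kappa(a)$.

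When $a\notin C^f(0)$, set $E = \{\epsilon\in\mathbb{R}_+ : a\in C^f(1^a_\epsilon)\}$ and $\epsilon^* = \inf E$. Four claims must be verified: (i) $E\neq\emptyset$; (ii) $\epsilon^*\in E$ and $\epsilon^*>0$; (iii) $C^f(1^a_{\epsilon^*})=C^f(0)\cup\{a\}$; (iv) no other $\epsilon$ solves the defining equation. For (i), non-imposition provides $s$ with $f(s)=a$, and for any scalar $\epsilon$ exceeding $\max_{i,c\neq a}(s_i^a-s_i^c)$, PAD applied to the shift $s\to 1^a_\epsilon$ forces $f(1^a_\epsilon)=a$ and hence $a\in C^f(1^a_\epsilon)$ via Lemma~\ref{lem:new1}. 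For (ii), the identity $1^a_\epsilon+\delta_a = 1^a_{\epsilon+\eta}+(\delta-\eta 1_n)_a$ (valid when $\eta 1_n\ll\delta$) shows that $a\in C^f(1^a_{\epsilon+\eta})$ implies $a\in C^f(1^a_\epsilon)$, which gives attainment of the infimum (take $\eta_n\downarrow 0$ with $\epsilon^*+\eta_n\in E$) and, applied at $\epsilon=0$, forces $\epsilon^*>0$ (else $a\in C^f(0)$, contrary to hypothesis).

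The technical heart is (iii). For $C^f(1^a_{\epsilon^*})\setminus\{a\}\subseteq C^f(0)$, fix $b\in C^f(1^a_{\epsilon^*})$ with $b\neq a$ and $\eta\gg 0$; if $f(\eta_b)=a$ then $a\in C^f(\eta_b)$ combined with $\epsilon^* 1_n\gg 0$ gives $f(1^a_{\epsilon^*}+\eta_b)=a$, contradicting $f(1^a_{\epsilon^*}+\eta_b)=b$, and if $f(\eta_b)=c\notin\{a,b\}$ then binary independence on $(b,c)$ between $\eta_b$ and $1^a_{\epsilon^*}+\eta_b$ (which agree on columns $b,c$), coupled with the transfer $c\in C^f(1^a_{\epsilon^*}+\eta_b)\Rightarrow c\in C^f(1^a_{\epsilon^*})$ obtained from binary independence on $(a,c)$, combined with a $P$-set argument (Lemma~\ref{lem:psetlavi1}), rules out this case. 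For the reverse inclusion $C^f(0)\setminus\{a\}\subseteq C^f(1^a_{\epsilon^*})$---the most delicate step---the plan is to apply Lemma~\ref{lem:psetlavi1}(1) with $\mathbb{R}^n_{++}\subseteq P^f(b,a)$ (from $b\in C^f(0)$) and $(\epsilon^*+\tau)1_n\in P^f(a,b)$ for every $\tau>0$ (from $a\in C^f(1^a_{\epsilon^*+\tau})$) to conclude $f(1^a_{\epsilon^*}+\mu_b)\neq a$ whenever $\mu\gg\epsilon^* 1_n$, and then to propagate this conclusion to all $\mu\gg 0$ via binary independence.

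Uniqueness (iv) rests on a parallel $P$-set argument: for $\epsilon<\epsilon^*$ the equality fails because $a\notin C^f(1^a_\epsilon)$; for $\epsilon>\epsilon^*$, Lemma~\ref{lem:psetlavi1}(1) applied with $\beta=\epsilon 1_n$ and shift $\xi 1_n$ for some $0<\xi<\epsilon-\epsilon^*$ yields $-\epsilon 1_n\notin P^f(b,a)$ for every $b\neq a$; but $b\in C^f(1^a_\epsilon)$ would place the vector $t^b-t^a=-\epsilon 1_n$ at $t=1^a_\epsilon$ into $P^f(b,a)$, a contradiction. Hence $C^f(1^a_\epsilon)=\{a\}\subsetneq C^f(0)\cup\{a\}$ for $\epsilon>\epsilon^*$, closing uniqueness. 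The main obstacle throughout is the reverse inclusion in (iii), where the strict form of PAD forces reliance on the $P$-set machinery interleaved with binary independence rather than a direct PAD chain.
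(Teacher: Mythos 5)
Your overall architecture matches the paper's: define $E=\{\epsilon\in\mathbb{R}_+ : a\in C^f(1^a_\epsilon)\}$, set $\epsilon^*=\inf E$, show attainment, then show $C^f(1^a_{\epsilon^*})=C^f(0)\cup\{a\}$. But there is a genuine gap in step (iii), and it lands exactly where the paper instead invokes Lemma~\ref{lem:ch1}, which your proof never uses.

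For the reverse inclusion $C^f(0)\setminus\{a\}\subseteq C^f(1^a_{\epsilon^*})$ you need $f(1^a_{\epsilon^*}+\mu_b)=b$ for \emph{all} $\mu\gg 0$. Your $P$-set argument via Lemma~\ref{lem:psetlavi1}(1) is correct but only delivers $f(1^a_{\epsilon^*}+\mu_b)\neq a$ when $\mu\gg\epsilon^*1_n$: one places $\epsilon^*1_n-\mu\in P^f(a,b)$ and derives $\mu-\epsilon^*1_n-\delta\notin P^f(b,a)$ for small $\delta\gg 0$, which contradicts $\mathbb{R}^n_{++}\subseteq P^f(b,a)$ only if $\mu-\epsilon^*1_n-\delta\gg 0$. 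For $\mu$ with $\min_i\mu_i\le\epsilon^*$ no contradiction is produced, and the plan to ``propagate to all $\mu\gg 0$ via binary independence'' cannot close this: Proposition~\ref{lem:new3} compares choice sets of profiles that \emph{agree on the two relevant columns}, whereas $1^a_{\epsilon^*}+\mu_b$ and $1^a_{\epsilon^*}+\mu'_b$ differ precisely in column $b$, the one you are trying to control. PAD is also unavailable in the needed direction (you would have to decrease $b$'s column while holding everything else, which is not a PAD move for winner $b$). The paper sidesteps all of this: by Proposition~\ref{lem:new3} the only possibilities for $C^f(1^a_{\kappa(a)})$ inside $C^f(0)\cup\{a\}$ are $\{a\}$ and $C^f(0)\cup\{a\}$; Lemma~\ref{lem:ch1} (if the choice set is the singleton $\{a\}$, one can strictly lower $a$'s column and keep $a$ in the choice set) then rules out $\{a\}$ because it would contradict $\kappa(a)$ being the infimum. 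You will need Lemma~\ref{lem:ch1} or an equivalent argument here; the $P$-set machinery alone, as deployed, does not suffice.

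Two smaller points. In (ii), the stated implication ``$a\in C^f(1^a_{\epsilon+\eta})$ implies $a\in C^f(1^a_\epsilon)$'' is false for a fixed $\eta>0$: membership in $C^f(1^a_{\epsilon+\eta})$ controls $f$ only on the perturbations $\delta\gg\eta 1_n$ of $1^a_\epsilon$, not on all $\delta\gg 0$. Attainment does hold, but the correct argument uses that $\epsilon^*=\inf E$ so $\eta$ can be chosen arbitrarily small for each fixed $\delta\gg 0$; and $\epsilon^*>0$ then follows directly from attainment plus $a\notin C^f(0)$, not from an application ``at $\epsilon=0$''. Finally, your uniqueness argument (iv) via Lemma~\ref{lem:psetlavi1}(1), showing $C^f(1^a_\epsilon)=\{a\}$ for $\epsilon>\epsilon^*$, is correct and is in fact more explicit than the paper's terse ``by PAD, if $\kappa(a)$ exists, it is unique.''
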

\begin{proof}
For all $a \in C^f(0)$, $\kappa(a)=0$, and hence, the lemma is true.
Consider $a \notin C^f(0)$. If $\kappa(a)$ exists, by PAD, it is unique.
We show that $\kappa(a)$ exists. We do this in two steps. \\

\noindent {\sc Step 1:} We show that there exists an $\epsilon \in \mathbb{R}_+$
such that $a \in C^f(1^a_{\epsilon})$. By non-imposition, there exists a type profile
$t$ such that $f(t)=a$. By PAD, there exists an $\epsilon \in \mathbb{R}$
such that $a \in C^f(1^a_{\epsilon})$. Moreover $\epsilon > 0$ since $a \notin C^f(0)$. \\

\noindent {\sc Step 2:} We now prove the lemma. Define
\begin{align*}
\kappa(a) &= \inf \{\epsilon: a \in C^f(1^a_{\epsilon})\}.
\end{align*}
By Step 1, $\kappa(a)$ exists. We show that $C^f(1^a_{\kappa(a)})=C^f(0) \cup \{a\}$.
Consider $b \notin (C^f(0) \cup \{a\})$. By PAD, if $b \in C^f(1^a_{\kappa(a)})$, then
$b \in C^f(0)$, which is a contradiction. Hence, $b \notin C^f(1^a_{\kappa(a)})$.
Next, by Proposition \ref{lem:new3}, we can conclude that either $C^f(1^a_{\kappa(a)})=C^f(0) \cup \{a\}$
or $C^f(1^a_{\kappa(a)}) = \{a\}$. Assume for contradiction $C^f(1^a_{\kappa(a)})=\{a\}$.
Then, by Lemma \ref{lem:ch1}, there exists $\epsilon \in \mathbb{R}^n_{++}$ such that
$a \in C^f(1^a_{\kappa(a)-\epsilon})$. This is a contradiction by the definition
of $\kappa(a)$. This shows that $C^f(1^a_{\kappa(a)})=C^f(0) \cup \{a\}$.
\end{proof}

We now prove a critical lemma.

\begin{lemma}\label{lem:zero}
Suppose $f$ is implementable and satisfies non-imposition. Let $t$ be
a type profile such that $t^a=1_{\kappa(a)}$ for all $a \in A$.
Then, $C^f(t)=A$.
\end{lemma}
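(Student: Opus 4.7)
The plan is to apply Proposition \ref{lem:new3}(a) (binary independence) in a single sweep, using a fixed reference element of $C^f(0)$. First, by Lemma \ref{lem:new1} applied to the zero profile, $C^f(0)$ is nonempty; fix some $e \in C^f(0)$, which in particular forces $\kappa(e)=0$ by the definition of $\kappa$. My goal is to show that $a \in C^f(t) \Leftrightarrow e \in C^f(t)$ for every $a \in A$; combined with the nonemptiness of $C^f(t)$, this will immediately yield $C^f(t)=A$.

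To establish this equivalence I use as benchmark the profile $r_a := 1^a_{\kappa(a)}$ (which reduces to the zero profile when $a \in C^f(0)$, since then $\kappa(a)=0$). By Lemma \ref{lem:finitekappa} we have $C^f(r_a) = C^f(0)\cup \{a\}$, so both $a$ and $e$ lie in $C^f(r_a)$. Moreover, $r_a$ and $t$ agree on columns $a$ and $e$: column $a$ equals $1_{\kappa(a)}$ in both profiles, while column $e$ equals the zero vector in $r_a$ (all columns other than column $a$ are zero there) and equals $1_{\kappa(e)}=0$ in $t$. Proposition \ref{lem:new3}(a) applied to the pair $(a,e)$ with $t$ playing the role of the second profile therefore gives $a \in C^f(t) \Leftrightarrow e \in C^f(t)$.

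To conclude, Lemma \ref{lem:new1} yields $f(t) \in C^f(t)$, so $C^f(t) \ne \emptyset$; picking any $a_0 \in C^f(t)$ and applying the equivalence to it forces $e \in C^f(t)$, and then applying the equivalence to every $a \in A$ forces $a \in C^f(t)$. Hence $C^f(t)=A$.

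The only real insight --- and the main obstacle to spot --- is the choice of benchmark $r_a$: it works precisely because every column of $r_a$ other than column $a$ is the zero vector, and this aligns with $t$ on column $e$ thanks to $\kappa(e)=0$ for every $e \in C^f(0)$. No induction, continuity argument, or chaining of $P$-sets is required; a single application of binary independence together with the definition of $\kappa$ suffices.
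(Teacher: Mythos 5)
Your proof is correct, and it takes a genuinely different and in fact cleaner route than the paper's. The paper proves the lemma by an iterative process: starting from the zero profile, it raises one column at a time to $1_{\kappa(\cdot)}$, maintaining at each step the invariant that the choice set equals $C^f(0)$ together with the alternatives raised so far; this requires ruling out a degenerate case at each step via Lemma \ref{lem:ch1}. You instead fix a reference $e \in C^f(0)$ (which forces $\kappa(e)=0$, so that column $e$ in both $t$ and $r_a := 1^a_{\kappa(a)}$ is the zero vector) and, for each $a$ separately, apply Proposition \ref{lem:new3}(a) once to the pair of profiles $(r_a,t)$ with the alternatives $(a,e)$. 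This gives the clean equivalence $a \in C^f(t) \Leftrightarrow e \in C^f(t)$, from which $C^f(t)=A$ follows immediately via nonemptiness of $C^f(t)$. Your argument entirely bypasses Lemma \ref{lem:ch1} and the finite-step induction, at the cost of the one observation that both profiles coincide on column $e$ precisely because $\kappa$ vanishes on $C^f(0)$ --- that is indeed the crux of your approach and it is sound.
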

\begin{proof}
We start from the type profile $0$ and move to $t$ in finite number of
steps. Consider a set $A^0 \subseteq A$. 
Initially, $A^0=A \setminus C^f(0)$.
Now, choose $a \in A^0$, and consider $1^a_{\kappa(a)}$.
By definition of $\kappa(a)$, $C^f(1^a_{\kappa(a)})=\{a\} \cup C^f(0)$.
Now, set $A^0:= A^0 \setminus \{a\}$, and choose
$b \in A^0$. We now define a type profile $s$ such that $s^a=1_{\kappa(a)}$
and $s^b=1_{\kappa(b)}$ but $s^c=0$ for all $c \notin \{a,b\}$.
By Proposition \ref{lem:new3}, either $C^f(s)=C^f(1^a_{\kappa(a)}) \cup \{b\}$
or $C^f(s)=\{b\}$. The latter case is not possible by Lemma \ref{lem:ch1}
since it will imply $b \in C^f(1^b_{\kappa(b)-\epsilon})$ for some $\epsilon \in \mathbb{R}^n_{++}$, 
which will violate the definition of $\kappa(b)$.
Hence, $C^f(s)=C^f(1^a_{\kappa(a)}) \cup \{b\}$.
Now, we set $A^0:= A^0 \setminus \{b\}$, and repeat. Since $A$ is finite,
this process will terminate with type profile $t$ such that $C^f(t)=A$.
\end{proof}

We now have all ingredients for proving the Roberts' theorem.
\begin{theorem}[\cite{Roberts79}]\label{theo:robertsnewproof}
Suppose $\mathbb{T}^n=\mathbb{R}^{m \times n}$. If $f$ is an implementable social choice function 
and satisfies non-imposition, then there exists weights $\lambda \in \mathbb{R}^n_+ \setminus \{0\}$ and 
a deterministic real-valued function $\kappa:A \rightarrow \mathbb{R}$ such that for all $t \in \mathbb{T}^n$,
\begin{align*}
f(t) &\in \arg \max_{a \in A}\big[\sum_{i \in N}\lambda_it_i^a-\kappa(a)\big]
\end{align*}
\end{theorem}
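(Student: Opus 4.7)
The strategy is to reduce the general (affine maximizer) theorem to the neutral version, Theorem~\ref{theo:newroberts}, by absorbing the allocation-dependent constants into an additive shift. Concretely, the plan is to work with the shifted social choice function $f^{\kappa}$ introduced before Proposition~\ref{prop:unr1}, using the particular mapping $\kappa$ constructed in Lemma~\ref{lem:finitekappa}.

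First I would verify that $f^{\kappa}$ inherits the hypotheses needed to apply Theorem~\ref{theo:newroberts}. Implementability is immediate from Proposition~\ref{prop:unr1}. Non-imposition passes from $f$ to $f^{\kappa}$ because each $\kappa(a)$ is finite: for any $a \in A$ and a profile $t'$ with $f(t')=a$, the profile $t := t' - 1_{\kappa}$ satisfies $f^{\kappa}(t) = a$.

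The main step is showing that $f^{\kappa}$ is neutral. By Proposition~\ref{prop:neu1}, it suffices to prove $P^{f^{\kappa}}(a,b) = P^{f^{\kappa}}(c,d)$ for all $a,b,c,d \in A$. Unwinding definitions (together with the fact that $a \in C^{f^{\kappa}}(t)$ iff $a \in C^f(t + 1_{\kappa})$) gives the translation identity $\alpha \in P^{f^{\kappa}}(a,b) \Leftrightarrow \alpha + 1_{\kappa(a) - \kappa(b)} \in P^f(a,b)$. In particular, to show $0 \in P^{f^{\kappa}}(a,b)$ for every pair $(a,b)$, it suffices to exhibit a profile $s$ with $a \in C^f(s)$ and $s^a - s^b = 1_{\kappa(a) - \kappa(b)}$. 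But this is exactly the profile of Lemma~\ref{lem:zero}, namely $s^a = 1_{\kappa(a)}$ for every $a$, since $C^f(s) = A$ makes every $a$ an element of the choice set. Once $0 \in P^{f^{\kappa}}(a,b)$ for all $a,b$, Lemma~\ref{lem:psetlavi1}(2) applied to $f^{\kappa}$ (which is implementable and non-imposing) yields $P^{f^{\kappa}}(a,b) \subseteq P^{f^{\kappa}}(a,d) \subseteq P^{f^{\kappa}}(c,d)$ by successively adding the zero vector, and symmetry in $(a,b)$ vs.\ $(c,d)$ delivers equality.

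Finally, Theorem~\ref{theo:newroberts} applies to $f^{\kappa}$ (the unrestricted domain $\mathbb{R}^m$ is an $m$-dimensional open interval with $\alpha_i = -\infty$, $\beta_i = \infty$), producing weights $\lambda \in \mathbb{R}^n_+ \setminus \{0\}$ with $f^{\kappa}(t) \in \arg\max_a \sum_i \lambda_i t_i^a$ for every $t$. Substituting $t = s - 1_{\kappa}$ and setting $\tilde\kappa(a) := \big(\sum_{i \in N} \lambda_i\big)\kappa(a)$ yields
\[
f(s) \in \arg\max_{a \in A}\Big[\sum_{i \in N} \lambda_i s_i^a - \tilde\kappa(a)\Big],
\]
which is the desired affine maximizer form. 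The principal obstacle is the neutrality step: the translation identity between the $P$-sets of $f$ and $f^{\kappa}$ is conceptually simple, but it is the precise choice of $\kappa$ in Lemma~\ref{lem:finitekappa}, together with Lemma~\ref{lem:zero}, that makes $0$ land in every $P$-set of $f^{\kappa}$, after which the additivity in Lemma~\ref{lem:psetlavi1}(2) collapses all of them.
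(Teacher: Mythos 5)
Your proposal is correct and follows essentially the same route as the paper: shift by $\kappa$ from Lemma \ref{lem:finitekappa}, use Lemma \ref{lem:zero} to place $0$ in every $P$-set of $f^{\kappa}$, collapse the $P$-sets via Lemma \ref{lem:psetlavi1}(2), invoke Proposition \ref{prop:neu1} to get neutrality of $f^{\kappa}$, and apply Theorem \ref{theo:newroberts} before translating back. The only minor additions over the paper's exposition are your explicit verification that $f^{\kappa}$ inherits non-imposition and your statement of the translation identity between $P^{f^{\kappa}}$ and $P^f$, both of which the paper leaves implicit.
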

\begin{proof}
Since $f$ is implementable and satisfies non-imposition, by Lemma \ref{lem:finitekappa},
there exists a mapping $\kappa:A \rightarrow \mathbb{R}$ satisfying properties
stated in Lemmas \ref{lem:finitekappa} and \ref{lem:zero}. Now, consider the social
choice function $f^{\kappa}$. By Proposition \ref{prop:unr1}, $f^{\kappa}$ is implementable.
By definition,
$f^{\kappa}(0)=f(1_{\kappa})$. By Lemma \ref{lem:zero}, $C^{f^{\kappa}}(0)=C^f(t)=A$. This implies
that $0 \in P^{f^{\kappa}}(a,b)$ for all $a,b \in A$.

Now, pick $a,b,c,d \in A$ and let $\beta \in P^{f^{\kappa}}(a,b)$. But $0 \in P^{f^{\kappa}}(b,d)$.
By Lemma \ref{lem:psetlavi1}, $\beta \in P^{f^{\kappa}}(a,d)$. Now, using $0 \in P^{f^{\kappa}}(c,a)$,
and applying Lemma \ref{lem:psetlavi1} again, we get $\beta \in P^{f^{\kappa}}(c,d)$.
By Proposition \ref{prop:neu1}, $f^{\kappa}$ is neutral.
By Theorem \ref{theo:newroberts}, $f^{\kappa}$ is a weighted
welfare maximizer. This implies that there exists $\lambda \in \mathbb{R}^n \setminus
\{0\}$ such that for every $t \in \mathbb{T}^n$,
\begin{align*}
f^{\kappa}(t) &\in \arg \max_{a \in A}\sum_{i=1}^n\lambda_i t^a_i.
\end{align*}
But this implies that, for every $t \in \mathbb{T}^n$,
\begin{align*}
f^{\kappa}(t-1_{\kappa}) &\in \arg \max_{a \in A}\sum_{i=1}^n\lambda_i(t-1_{\kappa})^a_i.
\end{align*}
This in turn implies that, for every $t \in \mathbb{T}^n$,
\begin{align*}
f(t) &\in \arg \max_{a \in A}\big[\sum_{i=1}^n\lambda_i [t^a_i - \kappa(a)]\big]
\end{align*}

Since we can assume without loss of generality that $\lambda_i \in [0,1]$ for all $i \in N$,
we can immediately infer Roberts' theorem.
\end{proof}

To summarize, Roberts' theorem can be proved using Roberts' theorem with neutrality by
transforming any social choice function to a neutral social choice function as given by
Proposition \ref{prop:neu1}. This transformation seems to require that the domain be
unrestricted.

\section{Conclusion}
 
 We have provided a characterization of domains over which every implementable and
 neutral social choice function is a weighted welfare maximizer. Our proof technique
 reduces the problem of characterizing such social choice functions to the
 problem of characterizing orderings over Euclidean space, a problem which has
 been studied at length in social choice theory. Finally, we show how Roberts' theorem
 (the general version) can be proved using Roberts' theorem with neutrality. This
 proof requires transforming any implementable social choice function into a neutral
 and implementable social choice function. To our knowledge, this transformation seems
 to require the unrestricted domain.

We summarize our main contribution in Figure \ref{fig:contr}. The arrows in this figure indicate
implications. As the figure shows, our results can be thought to be equivalence of the
PAD condition and implementability in the presence of neutrality in open interval domains. 
It will be interesting to investigate this
equivalence in the absence of neutrality.
 
\begin{figure}[!tbh]
\centering
\setlength{\unitlength}{3118sp}%
\begingroup\makeatletter\ifx\SetFigFont\undefined%
\gdef\SetFigFont#1#2#3#4#5{%
  \reset@font\fontsize{#1}{#2pt}%
  \fontfamily{#3}\fontseries{#4}\fontshape{#5}%
  \selectfont}%
\fi\endgroup%
\begin{picture}(8349,5016)(5164,-6115)
\thinlines
{\color[rgb]{0,0,0}\put(8731,-1606){\oval(210,210)[bl]}
\put(8731,-1216){\oval(210,210)[tl]}
\put(10021,-1606){\oval(210,210)[br]}
\put(10021,-1216){\oval(210,210)[tr]}
\put(8731,-1711){\line( 1, 0){1290}}
\put(8731,-1111){\line( 1, 0){1290}}
\put(8626,-1606){\line( 0, 1){390}}
\put(10126,-1606){\line( 0, 1){390}}
}%
{\color[rgb]{0,0,0}\put(12106,-1606){\oval(210,210)[bl]}
\put(12106,-1216){\oval(210,210)[tl]}
\put(13396,-1606){\oval(210,210)[br]}
\put(13396,-1216){\oval(210,210)[tr]}
\put(12106,-1711){\line( 1, 0){1290}}
\put(12106,-1111){\line( 1, 0){1290}}
\put(12001,-1606){\line( 0, 1){390}}
\put(13501,-1606){\line( 0, 1){390}}
}%
{\color[rgb]{0,0,0}\put(8731,-5656){\oval(210,210)[bl]}
\put(8731,-5266){\oval(210,210)[tl]}
\put(10021,-5656){\oval(210,210)[br]}
\put(10021,-5266){\oval(210,210)[tr]}
\put(8731,-5761){\line( 1, 0){1290}}
\put(8731,-5161){\line( 1, 0){1290}}
\put(8626,-5656){\line( 0, 1){390}}
\put(10126,-5656){\line( 0, 1){390}}
}%
{\color[rgb]{0,0,0}\put(12106,-3781){\oval(210,210)[bl]}
\put(12106,-3391){\oval(210,210)[tl]}
\put(13396,-3781){\oval(210,210)[br]}
\put(13396,-3391){\oval(210,210)[tr]}
\put(12106,-3886){\line( 1, 0){1290}}
\put(12106,-3286){\line( 1, 0){1290}}
\put(12001,-3781){\line( 0, 1){390}}
\put(13501,-3781){\line( 0, 1){390}}
}%
{\color[rgb]{0,0,0}\put(5281,-1606){\oval(210,210)[bl]}
\put(5281,-1216){\oval(210,210)[tl]}
\put(6571,-1606){\oval(210,210)[br]}
\put(6571,-1216){\oval(210,210)[tr]}
\put(5281,-1711){\line( 1, 0){1290}}
\put(5281,-1111){\line( 1, 0){1290}}
\put(5176,-1606){\line( 0, 1){390}}
\put(6676,-1606){\line( 0, 1){390}}
}%
\thicklines
{\color[rgb]{0,0,0}\put(6676,-1261){\vector( 1, 0){1950}}
}%
{\color[rgb]{0,0,0}\put(8626,-1486){\vector(-1, 0){1950}}
}%
{\color[rgb]{0,0,0}\put(10126,-1261){\vector( 1, 0){1800}}
}%
{\color[rgb]{0,0,0}\put(12001,-1486){\vector(-1, 0){1875}}
}%
{\color[rgb]{0,0,0}\put(12751,-1711){\vector( 0,-1){1575}}
}%
\thinlines
{\color[rgb]{0,0,0}\put(8731,-3781){\oval(210,210)[bl]}
\put(8731,-3391){\oval(210,210)[tl]}
\put(10021,-3781){\oval(210,210)[br]}
\put(10021,-3391){\oval(210,210)[tr]}
\put(8731,-3886){\line( 1, 0){1290}}
\put(8731,-3286){\line( 1, 0){1290}}
\put(8626,-3781){\line( 0, 1){390}}
\put(10126,-3781){\line( 0, 1){390}}
}%
\thicklines
{\color[rgb]{0,0,0}\put(12751,-3886){\line( 0,-1){1500}}
\put(12751,-5386){\vector(-1, 0){2625}}
}%
{\color[rgb]{0,0,0}\put(8626,-3586){\line(-1, 0){2175}}
\put(6451,-3586){\vector( 0, 1){1875}}
}%
{\color[rgb]{0,0,0}\put(8626,-5461){\line(-1, 0){2925}}
\put(5701,-5461){\vector( 0, 1){3750}}
}%
{\color[rgb]{0,0,0}\put(12001,-3586){\vector(-1, 0){1875}}
}%
\put(8851,-1591){\makebox(0,0)[lb]{\smash{{\SetFigFont{9}{10.8}{\familydefault}{\mddefault}{\updefault}{\color[rgb]{0,0,0}Monotonicity}%
}}}}
\put(12226,-1591){\makebox(0,0)[lb]{\smash{{\SetFigFont{9}{10.8}{\familydefault}{\mddefault}{\updefault}{\color[rgb]{0,0,0}Monotonicity}%
}}}}
\put(12601,-3661){\makebox(0,0)[lb]{\smash{{\SetFigFont{9}{10.8}{\familydefault}{\mddefault}{\updefault}{\color[rgb]{0,0,0}PAD}%
}}}}
\put(9076,-1336){\makebox(0,0)[lb]{\smash{{\SetFigFont{9}{10.8}{\familydefault}{\mddefault}{\updefault}{\color[rgb]{0,0,0}Cycle}%
}}}}
\put(12526,-1336){\makebox(0,0)[lb]{\smash{{\SetFigFont{9}{10.8}{\familydefault}{\mddefault}{\updefault}{\color[rgb]{0,0,0}Weak}%
}}}}
\put(8926,-5386){\makebox(0,0)[lb]{\smash{{\SetFigFont{9}{10.8}{\familydefault}{\mddefault}{\updefault}{\color[rgb]{0,0,0}Weighted}%
}}}}
\put(8926,-5611){\makebox(0,0)[lb]{\smash{{\SetFigFont{9}{10.8}{\familydefault}{\mddefault}{\updefault}{\color[rgb]{0,0,0}Efficiency}%
}}}}
\put(9076,-3511){\makebox(0,0)[lb]{\smash{{\SetFigFont{9}{10.8}{\familydefault}{\mddefault}{\updefault}{\color[rgb]{0,0,0}Affine}%
}}}}
\put(8926,-3736){\makebox(0,0)[lb]{\smash{{\SetFigFont{9}{10.8}{\familydefault}{\mddefault}{\updefault}{\color[rgb]{0,0,0}Maximizers}%
}}}}
\put(10576,-3811){\makebox(0,0)[lb]{\smash{{\SetFigFont{9}{10.8}{\familydefault}{\mddefault}{\updefault}{\color[rgb]{0,0,0}Unrestricted}%
}}}}
\put(10576,-4036){\makebox(0,0)[lb]{\smash{{\SetFigFont{9}{10.8}{\familydefault}{\mddefault}{\updefault}{\color[rgb]{0,0,0}Domains}%
}}}}
\put(10576,-4261){\makebox(0,0)[lb]{\smash{{\SetFigFont{9}{10.8}{\familydefault}{\mddefault}{\updefault}{\color[rgb]{0,0,0}\cite{Roberts79}}%
}}}}
\put(10651,-5611){\makebox(0,0)[lb]{\smash{{\SetFigFont{9}{10.8}{\familydefault}{\mddefault}{\updefault}{\color[rgb]{0,0,0}Neutrality and}%
}}}}
\put(12901,-2386){\makebox(0,0)[lb]{\smash{{\SetFigFont{9}{10.8}{\familydefault}{\mddefault}{\updefault}{\color[rgb]{0,0,0}\cite{Roberts79}}%
}}}}
\put(7276,-1711){\makebox(0,0)[lb]{\smash{{\SetFigFont{9}{10.8}{\familydefault}{\mddefault}{\updefault}{\color[rgb]{0,0,0}\cite{Rochet87}}%
}}}}
\put(10201,-1786){\makebox(0,0)[lb]{\smash{{\SetFigFont{9}{10.8}{\familydefault}{\mddefault}{\updefault}{\color[rgb]{0,0,0}Auction Domains}%
}}}}
\put(10201,-2011){\makebox(0,0)[lb]{\smash{{\SetFigFont{9}{10.8}{\familydefault}{\mddefault}{\updefault}{\color[rgb]{0,0,0}\cite{Bikh06}}%
}}}}
\put(10651,-6061){\makebox(0,0)[lb]{\smash{{\SetFigFont{9}{10.8}{\familydefault}{\mddefault}{\updefault}{\color[rgb]{0,0,0}This paper}%
}}}}
\put(10651,-5836){\makebox(0,0)[lb]{\smash{{\SetFigFont{9}{10.8}{\familydefault}{\mddefault}{\updefault}{\color[rgb]{0,0,0}Open Interval Domains}%
}}}}
\put(5251,-1486){\makebox(0,0)[lb]{\smash{{\SetFigFont{9}{10.8}{\familydefault}{\mddefault}{\updefault}{\color[rgb]{0,0,0}Implementability}%
}}}}
\end{picture}%
\caption{Understanding Implementability}
\label{fig:contr}
\end{figure}

\bigskip

\section*{Appendix A}\label{sec:appa}

\noindent {\sc Proof of Fact \ref{fact:f1}}

\begin{proof}
Let $x,y \in X$ and
$z=\alpha x + (1-\alpha) y$ for some $\alpha \in (0,1)$.
We consider two possible cases. \\

\noindent {\sc Case 1:} Suppose $X$ is closed in $\mathbb{D}$.
Assume for contradiction that $z \notin X$. Since $D$ is
convex, $z \in \mathbb{D} \setminus X$. Since $X$ is closed
in $\mathbb{D}$, the set $\mathbb{D} \setminus X$ is open
in $\mathbb{D}$. Hence, $\mathbb{D} \setminus X$ is open in
$\mathbb{R}^n$. This means, there exists an $n$-dimensional
open ball $B_{\delta}(z)=\{z': \lVert z'-z \rVert < \delta\}$ of radius $\delta$ such that
every $z' \in B_{\delta}(z)$ belongs to $\mathbb{D} \setminus X$.

Now, consider an iterative procedure as follows. Let $l,h$ be two variables
in $\mathbb{R}^n$. Initially, set $l=x$ and $h=y$. In every step,
\begin{itemize}
\item if $z$ is in the convex hull of $l$ and $\frac{l+h}{2}$ then set
$h=\frac{l+h}{2}$,
\item else set $l=\frac{l+h}{2}$.
\end{itemize}
If $\lVert l-h \rVert < 2\delta$, stop. Else, repeat the step.

Since $\lVert l-h \rVert$ strictly decreases in every step, the procedure
will terminate. Moreover, $l$ and $h$ at the end of the procedures are
two points in $X$. Hence, $\frac{l+h}{2}$ is in $X$ and lies in the ball
$B_{\delta}(z)$. This is a contradiction~\footnote{Essentially, the procedure
generates a sequence of dyadic rational numbers. We know that the set
of dyadic rational numbers are dense. Since $X$ is closed, we are done.}. \\

\noindent {\sc Case 2:} Suppose $X$ is open in $\mathbb{D}$.
Then $X$ is open in $\mathbb{R}^n$. This implies that there exists
an open ball $B_{\delta_x}(x)$ around $x$ of radius $\delta_x$ and an
open ball $B_{\delta_y}(y)$ around $y$ of radius $\delta_y$ such that
each of these balls are contained in $X$. Let $\delta=\min(\delta_x,\delta_y)$.
Using the fact that for every $x' \in B_{\delta_x}(x)$ and every $y' \in B_{\delta_y}(y)$
we have $\frac{x'+y'}{2} \in X$, we get that every $x'' \in B_{\delta}(\frac{x+y}{2})$
lies in $X$. Now, we can repeat the procedure of Case 1 to conclude that
$z \in X$.
\end{proof}

\noindent {\sc Proof of Theorem \ref{theo:efficiency}} \\

\begin{proof}
Note that $\mathbb{D}$ is open from above (i.e., for every $x \in D$,
there exists an $\epsilon \in \mathbb{R}^n_{++}$ such that $(x+\epsilon) \in D$) and 
a meet-semilattice (i.e., if $x,y \in D$, then $\min(x,y) \in D$). We can verify that
Propositions \ref{prop:order} and \ref{prop:axioms} are true as long as
$\mathbb{D}$ is open from above and a meet-semilattice. Hence,
by Proposition \ref{prop:order}, $R^f$ is an ordering. By Proposition \ref{prop:axioms}
and Lemma \ref{lem:anon}, $f$ satisfies weak Pareto, invariance, and anonymity (we do
not need continuity for this proof). Also, note that for any $x \in \mathbb{D}$, any permutation
of the elements of $x$ results in a vector in $\mathbb{D}$.

Now, choose $x,y \in \mathbb{D}$ such that $\sum_{i \in N}x_i=\sum_{i \in N}y_i$.
By anonymity, we can rearrange $x$ and $y$ in non-decreasing order but mutually
ranked the same way as $x$ and $y$. Considering successively, in these new vectors,
each pair of corresponding components and subtracting from each the minimal one,
we get again two new vectors which are ranked the same way as $x$ and $y$ by invariance
(note here that these two new vectors belong to $\mathbb{D}=[0,H)^n$). Repeating these
two operations at most $n$ times, we will reach two zero vectors (since $\sum_{i \in N}x_i=
\sum_{i \in N}y_i$). Hence, $x I^f y$.

Next, we show that if $\sum_{i \in N}x_i > \sum_{i \in N}y_i$ then $x P^f y$.
Let $\delta=\frac{1}{n}[\sum_{i \in N}x_i-\sum_{i \in N}y_i]$. Consider the vector
$z$ defined as $z_i=y_i+\delta$ for all $i \in N$. By weak Pareto $z P^f y$.
Further $\sum_{i \in N}x_i=\sum_{i \in N}z_i$. Hence, $x I^f z$. Hence, $x P^f y$.

By Lemma \ref{lem:new1}, for every $t \in \mathbb{T}^n$, we have
$f(t) \in C^f(t)$. Hence, $f(t) R^f a$ for all $a \in A$. Hence, $f$ is the
efficient social choice function.
\end{proof}

\section*{Appendix B}\label{sec:appb}

In this appendix, we show that a stronger, but natural definition of neutrality
implies our definition of neutrality.

\begin{defn}
A social choice function $f$ is {\bf scf-neutral} if for every $t \in \mathbb{T}^n$,
every permutation $\rho$ of $A$ and type profile $s$ induced by permutation $\rho$
on $t$, we have $f(s)=\rho(f(t))$ if $t \ne s$.
\end{defn}

\begin{claim}\label{cl:neut}
If a social choice function $f$ is implementable and scf-neutral, then it
is neutral.
\end{claim}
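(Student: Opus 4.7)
My plan is to show that $a \in C^f(t)$ implies $\rho(a) \in C^f(s)$ whenever $s$ is the $\rho$-permutation of $t$; applying the same argument to $\rho^{-1}$ gives the reverse inclusion and hence the required set equality $C^f(s) = \{\rho(a) : a \in C^f(t)\}$. Fix $a \in C^f(t)$, set $b = \rho(a)$, and fix an arbitrary $\epsilon' \gg 0$ with $(s^b + \epsilon', s^{-b}) \in \mathbb{T}^n$. Let $u := (s^b + \epsilon', s^{-b})$ and $v := (t^a + \epsilon', t^{-a})$. A direct check using $s^{\rho(c)} = t^c$ shows that $u$ is exactly the $\rho$-permutation of $v$. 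Moreover, $f(v) = a$ since $a \in C^f(t)$. The goal is to deduce $f(u) = b$.

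The easy case is $u \ne v$. Then scf-neutrality applied to $v$ and the permutation $\rho$ gives $f(u) = \rho(f(v)) = \rho(a) = b$, which is exactly what we need.

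The subtle case is $u = v$, i.e., $v$ is $\rho$-invariant, because scf-neutrality is formulated only for $s \ne t$. If $b = a$, there is nothing to prove since $f(u) = f(v) = a = b$. Otherwise $b \ne a$; $\rho$-invariance of $v$ at coordinate $a$ forces $t^b = v^b = v^a = t^a + \epsilon'$. I would now break this invariance by perturbing $\epsilon'$: pick any $\epsilon^*$ with $0 \ll \epsilon^* \ll \epsilon'$ and let $v^* := (t^a + \epsilon^*, t^{-a})$, with $u^*$ its $\rho$-permutation. Then $v^{*a} = t^a + \epsilon^* \ne t^a + \epsilon' = t^b = v^{*b}$, so $v^*$ is not $\rho$-invariant, hence $u^* \ne v^*$. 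Scf-neutrality then gives $f(u^*) = \rho(f(v^*)) = b$, using $a \in C^f(t)$ once more to get $f(v^*) = a$. Finally, observe that $u = (s^b + \epsilon', s^{-b})$ is obtained from $u^* = (s^b + \epsilon^*, s^{-b})$ by increasing only the $b$-column, by the strictly positive vector $\epsilon' - \epsilon^* \gg 0$; so $u^b - u^{*b} \gg 0 = u^c - u^{*c}$ for every $c \ne b$, and PAD (Roberts' lemma) gives $f(u) = b$, as desired.

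The only delicate point is the edge case where the perturbed profile $v$ happens to be $\rho$-invariant, which a priori can occur because the condition $t^{\rho(a)} = t^a + \epsilon'$ is satisfiable for the specific value $\epsilon' = t^{\rho(a)} - t^a$. The key observation that rescues the argument is that this invariance condition pins down $\epsilon'$ in a codimension-$n$ way, so a nearby $\epsilon^* \ll \epsilon'$ automatically breaks invariance, after which scf-neutrality applies cleanly and PAD propagates the conclusion back up to $\epsilon'$. No continuity, convexity, or neutrality-of-the-ordering machinery from the body of the paper is required—just scf-neutrality and PAD.
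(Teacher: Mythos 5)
Your proof is correct and takes essentially the same route as the paper's: perturb the $a$-column of $t$, invoke scf-neutrality on the perturbed profile, and use PAD to propagate membership into $C^f(s)$, with the reverse inclusion obtained by applying the argument to $\rho^{-1}$. The only difference is that you spell out carefully the degenerate case where the perturbed profile is $\rho$-invariant (so scf-neutrality cannot be applied directly), whereas the paper dismisses it with the terse parenthetical that $\epsilon$ can be chosen small enough to avoid it; your explicit perturbation to $\epsilon^* \ll \epsilon'$ followed by PAD is a cleaner way to justify that remark.
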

\begin{proof}
Since $f$ is implementable, it satisfies PAD. Fix a type profile $t$ and a permutation $\rho$
of $A$, and let $s$ be the type profile induced by $\rho$ on $t$. Consider $a \in C^f(t)$
and a type profile $u=(u^a=t^a+\epsilon,u^{-a}=t^{-a})$ for some $\epsilon \in \mathbb{R}^n_{++}$.
Hence, $f(u)=a$. Now, let $v$ be the type profile induced by permutation $\rho$ on
$u$. By scf-neutrality $f(v)=\rho(a)$ ($\epsilon$ can be chosen arbitrarily small so that
$u \ne v$). By PAD, $\rho(a) \in C^f(s)$.

To show that for any $a \notin C^f(t)$, we must have $a \notin C^f(s)$, assume for contradiction
$a \in C^f(s)$, and apply the previous argument to conclude $a \in C^f(t)$. This gives the
desired contradiction.
\end{proof}

\end{document}